\documentclass{article}
\usepackage[nonatbib,final]{neurips_2020}

\usepackage[utf8]{inputenc} 
\usepackage[T1]{fontenc}    
\usepackage{hyperref}       
\usepackage{url}            
\usepackage{booktabs}       
\usepackage{amsfonts}       
\usepackage{nicefrac}       
\usepackage{microtype}      
\usepackage{lipsum}
\usepackage[numbers]{natbib}
\usepackage{graphicx}

\usepackage[small]{caption}
\usepackage{algorithm}
\usepackage[noend]{algpseudocode}
\usepackage{amssymb}
\usepackage{mathtools}
\usepackage{amsmath}
\usepackage{tikz}
\usetikzlibrary{calc}
\usepackage{todonotes}
\usepackage{dsfont}
\usepackage{enumitem}
\usepackage{tikz}
\usepackage{amsthm}
\usepackage{thm-restate}
\usepackage[capitalise,noabbrev]{cleveref}
\usepackage{booktabs}
\usepackage{multirow}
\usepackage{newfloat}
\usepackage{wrapfig}
\usepackage{dsfont}
\usepackage{comment}
\usepackage{multirow}
\usepackage{mleftright}
\usepackage{mdframed}
\usepackage{bm}
\usepackage{multirow}

\newtheorem{lemma}{Lemma}
\newtheorem{remark}{Remark}

\newtheorem{definition}{Definition}

\newcommand{\pl}{\mathcal{P}}
\newcommand{\I}{\mathcal{I}}
\newcommand{\X}{\mathcal{X}}

\newcommand{\one}{\mathds{1}}
\newcommand{\icfr}{{ICFR}}
\newcommand{\rme}{\mathcal{R}^\textsc{ext}}
\newcommand{\rmi}{\mathcal{R}^\textsc{int}}

\newcommand\numberthis[1]{\addtocounter{equation}{1}\tag{\theequation}\label{#1}}
\newcommand*\circled[1]{\tikz[baseline=(char.base)]{
            \node[shape=circle,draw,inner sep=1pt] (char) {\scriptsize #1};}}
\newcommand{\defeq}{\mathrel{:\mkern-0.25mu=}}

\makeatletter
\newcommand{\vast}{\bBigg@{4}}
\makeatother

\title{No-Regret Learning Dynamics for Extensive-Form Correlated Equilibrium}

\author{
	Andrea~Celli\thanks{Equal contribution.}\\
	Politecnico di Milano\\
	\texttt{andrea.celli@polimi.it}
	\And
	Alberto~Marchesi$^\ast$\\
	Politecnico di Milano\\
	\texttt{alberto.marchesi@polimi.it}
	\And
	Gabriele Farina$^*$\\
	Carnegie Mellon University\\
	\texttt{gfarina@cs.cmu.edu}
	\And
	Nicola Gatti\\
	Politecnico di Milano\\
	\texttt{nicola.gatti@polimi.it}
}

\begin{document}
    \maketitle

    \begin{abstract}
	The existence of simple, uncoupled no-regret dynamics that converge to correlated equilibria in normal-form games is a celebrated result in the theory of multi-agent systems. Specifically, it has been known for more than 20 years that when all players seek to minimize their \emph{internal} regret in a repeated normal-form game, the empirical frequency of play converges to a normal-form correlated equilibrium. Extensive-form (that is, tree-form) games generalize normal-form games by modeling both sequential and simultaneous moves, as well as private information.
	Because of the sequential nature and presence of partial information in the game, extensive-form correlation has significantly different properties than the normal-form counterpart, many of which are still open research directions.
	%
	Extensive-form correlated equilibrium (EFCE) has been proposed as the natural extensive-form counterpart to normal-form correlated equilibrium. However, it was currently unknown whether EFCE emerges as the result of uncoupled agent dynamics.
	%
	%
	In this paper, we give the first uncoupled no-regret dynamics that converge to the set of EFCEs in $n$-player general-sum extensive-form games with perfect recall.
	First, we introduce a notion of \emph{trigger regret} in extensive-form games, which extends that of internal regret in normal-form games. When each player has low trigger regret, the empirical frequency of play is close to an EFCE. Then, we give an efficient no-trigger-regret algorithm. Our algorithm decomposes trigger regret into local subproblems at each decision point for the player, and constructs a global strategy of the player from the local solutions at each decision point.
    \end{abstract}

    \section{Introduction}

The {\em Nash equilibrium} (NE)~\citep{nash1950equilibrium} is the most common notion of rationality in game theory, and its computation in two-player, zero-sum games has been the flagship computational challenge in the area at the interplay between computer science and game theory (see, {\em e.g.}, the landmark results in heads-up no-limit poker by~\citet{brown2017superhuman} and~\citet{moravvcik2017stack}).
The assumption underpinning NE is that the interaction among players is fully {\em decentralized}.
Therefore, an NE is a distribution on the {\em uncorrelated} strategy space ({\em i.e.}, a product of independent distributions, one per player).
A competing notion of rationality is the {\em correlated equilibrium} (CE) proposed by~\citet{aumann1974subjectivity}.
A {\em correlated strategy} is a general distribution over joint action profiles and it is customarily modeled via a trusted external {\em mediator} that draws an action profile from this distribution, and privately recommends to each player her component.
A correlated strategy is a CE if no player has an incentive to choose an action different from the mediator's recommendation, because, assuming that all other players also obey, the suggested strategy is the best in expectation.

Many real-world strategic interactions involve more than two players with arbitrary ({\em i.e.}, general-sum) utilities.
In these settings, the notion of NE presents some weaknesses which render the CE a  natural solution concept: (i) computing an NE is an intractable problem, being \textsf{PPAD}-complete even in two-player games~\citep{chen2006settling,daskalakis2009complexity}; (ii) the NE is prone to equilibrium selection issues; and (iii) the social welfare that can be attained via an NE may be significantly lower than what can be achieved via a CE~\citep{koutsoupias1999worst,roughgarden2002bad}.
Moreover, in normal-form games, the notion of CE arises from simple learning dynamics in senses that NE does not~\citep{hart2000simple,cesa2006prediction}.

The notion of {\em extensive-form correlated equilibrium} (EFCE) by~\citet{von2008extensive} is a natural extension of the CE to the case of sequential strategic interactions.
In an EFCE, the mediator draws, before the beginning of the sequential interaction, a recommended action for each of the possible decision points ({\em i.e.}, {\em information sets}) that players may encounter in the game, but she does not immediately reveal recommendations to each player.
Instead, the mediator incrementally reveals relevant individual moves as players reach new information sets.
At any decision point, the acting player is free to defect from the recommended action, but doing so comes at the cost of future recommendations, which are no longer issued if the player deviates.

\paragraph{Original contributions} 
We focus on general-sum extensive-form games with an arbitrary number of players (including the {\em chance player}). 
In this setting, the problem of computing a feasible EFCE can be solved in polynomial time in the size of the game tree~\citep{huang2008computing} via a variation of the {\em Ellipsoid Against Hope} algorithm~\citep{papadimitriou2008,jiang2015polynomial}.
However, in practice, this approach cannot scale beyond toy problems.
Therefore, the following question remains open: {\em is it possible to devise simple dynamics leading to a feasible EFCE?}
In this paper, we show that the answer is positive. 
To do so, we define an EFCE via the notion of {\em trigger agent}~\citep{gordon2008no,dudik2009sampling}.
Then, we define the notion of \emph{trigger regret}, {\em i.e.}, a notion of {\em internal regret} suitable for extensive-form games.
We provide an algorithm, which we call \icfr, that minimizes trigger agent regrets via the decomposition of these regrets locally at each information set.
In order to do so, \icfr~instantiates an internal regret minimizer and multiple external regret minimizers for each information set.
We show that it is possible to orchestrate the learning procedure so that, for each information set, employing one regret minimizer per round does not compromise the overall convergence of the algorithm.
The empirical frequency of play generated by \icfr~converges to an EFCE almost surely in the limit.
These results generalize the seminal work by~\citet{hart2000simple} to the sequential case via a simple and natural framework.

{\color{violet}
\paragraph{Concurrent and Subsequent Work (Updated 2022)}

An updated and improved journal version of this paper is available on arXiv at \url{https://arxiv.org/abs/2104.01520} \cite{farina2021simple}. 
In the journal version, we completely revised the way in which our result is presented, by casting it into the framework of phi-regret minimization \citep{greenwald2003general,stoltz2007learning,gordon2008no}. This is a powerful improvement over the previous way of presenting our work. It helps to better connect the work to prior results that are also based on the phi-regret minimization framework.
Notably, we also strengthened our results by providing high-probability convergence bounds for EFCE that hold at finite time, besides almost-sure convergence results in the limit. The conference version only included an almost-sure convergence guarantee. 

In this section we mention related work that surfaced after the publication of the present paper, and summarize some recent trends related to phi-regret dynamics in games.

First, we acknowledge the thesis work by \citet{zhang2022simple} on computing certain refinements of EFCE via polynomial-time uncoupled learning dynamics, though their procedure could require up to exponential memory. In a later chapter of the thesis, the author notes that some of the dynamics introduced in the thesis can be modified to guarantee polynomial memory usage when convergence to the set of (unrefined) EFCE is sought. That work was conducted independently and concurrently with ours.

In a subsequent paper, \citet{morrill2021efficient} extend some of our results by conducting a study of different forms of correlation in extensive-form games, defining a taxonomy of solution concepts that includes in particular EFCE.
Each of their solution concepts is attained by a particular set of no-regret learning dynamics, which is obtained by instantiating the phi-regret minimization framework with a suitably-defined deviation function.
%
%
Specifically, they identify a general class of deviations---called {\em behavioral deviations}---that induce equilibria that can be found through uncoupled no-regret learning dynamics. Behavioral deviations are defined as those specifying an action transformation independently at each information set of the game. As the authors note, the deviation functions involved in the definition of EFCE do not fall under that category. A particular class of behavioral deviation functions---called \emph{causal partial sequence deviations}---induces solution concepts that are (subsets of) EFCEs. So, their result begets an alternative set of no-regret learning dynamics that converge to EFCE, based on a different set of deviation functions than those we use in this article.

A somewhat recent trend in the learning in game literature has seen the introduction of the concept of \emph{optimistic} learning dynamics which can guarantee convergence to the set of equilibria faster than the $O(1/\sqrt{T})$ rate attainable in the fully adversarial setting.
That line of work was pioneered by~\citet{Daskalakis11:Near}, and has since been extended along several lines~\citep{Rakhlin13:Online,Rakhlin13:Optimization,Syrgkanis15:Fast,Chen20:Hedging,Daskalakis21:Near,Daskalakis21:Fast,Piliouras21:Optimal}, incorporating partial or noisy information feedback~\citep{Foster16:Learning,Wei18:More,Hsieh22:No}, and more recently, general Markov games~\citep{Erez22:Regret,Zhang22:Policy}.
In the case of EFCE dynamics, the work of~\citet{Anagnostides22:Faster} establishes $O(T^{1/4})$ trigger regret bounds through \emph{optimistic hedge}, building on~\citep{Chen20:Hedging}, they showed \emph{multiplicative stability} of the fixed points associated with EFCE.

Finally, we acknowledge some very recent papers that have developed dynamics converging to EFCE under bandit feedback~\citep{Bai22:Efficient,Song22:Sample}.
}
    \section{Preliminaries}

In this section, we provide some groundings on sequential games and regret minimization (see the books by~\citet{shoham2008multiagent} and~\citet{cesa2006prediction}, for additional details).

\subsection{Extensive-form games}

We focus on \emph{extensive-form games} (EFGs) with imperfect information.
We denote the set of players as $\pl \cup\{c\}$, where $c$ is a \emph{chance player} that selects actions according to fixed known probability distributions, representing exogenous stochasticity.
An EFG is usually defined by means of a \emph{game tree}, where $H$ is the set of nodes of the tree, and a node $h \in H$ is identified by the ordered sequence of actions from the root to the node.
$Z \subseteq H$ is the set of terminal nodes, which are the leaves of the tree.
For every $h \in H \setminus Z$, we let $P(h) \in \pl \cup \{c\}$ be the unique player who acts at $h$ and $A(h)$ be the set of actions she has available.
%
%
For each player $i \in \mathcal{P}$, we let $u_i: Z \rightarrow \mathbb{R}$ be her payoff function.
Moreover, we denote by $p_c: Z \to (0,1)$ the function assigning each terminal node $z \in Z$ to the product of probabilities of chance moves encountered on the path from the root of the game tree to $z$.

Imperfect information is encoded by using \emph{information sets} (infosets).
Given $i \in \pl$, a player $i$'s infoset $I$ groups nodes belonging to player $i$ that are indistinguishable for her, \emph{i.e.}, $A(h) = A(k)$ for any pair of nodes $h, k \in I$.
$\mathcal{I}_i$ denotes the set of all player $i$'s infosets.
Moreover, we let $A(I)$ be the set of actions available at infoset $I \in \I_i$.
As customary, we assume that the game has \emph{perfect recall}, \emph{i.e.}, the infosets are such that no player forgets information once acquired.
In EFGs with perfect recall, the infosets $\I_i$ of each player $i \in \pl$ are partially ordered.
We write $I \preceq J$ whenever infoset $I\in \I_i$ \emph{precedes} $J \in \I_i$ according to such ordering, \emph{i.e.}, formally, there exists a path in the game tree connecting a node $h \in I$ to some node $k \in J$.
For the ease of notation, given $I \in \I_i$, we let $\mathcal{C}^\star(I)$ be the set of player $i$'s infosets that follow infoset $I$ (this included), defined as $\mathcal{C}^\star(I) \coloneqq \{ J \in \I_i \mid I \preceq J \}$.
Moreover, given $I \in \I_i$ and $a \in A(I)$, we let $\mathcal{C}(I,a) \subseteq \I_i$ be the set of player $i$'s infosets that immediately follow $I$ by playing action $a$, \emph{i.e.}, those reachable from at least one node $h \in I$ by following a path that includes $a$ and does not pass through another infoset of $i$.

\paragraph{Normal-form plans and strategies}
A \textit{normal-form plan} for player $i \in \pl$ is a tuple $\pi_i\in\Pi_i  \coloneqq \bigtimes_{I\in\mathcal{I}_i} A(I)$ which specifies an action for each player $i$'s infoset, where $\pi_i(I)$ represents the action selected by $\pi_i$ at infoset $I\in \mathcal{I}_i$.
We denote with $\pi \in \Pi \coloneqq \bigtimes_{i\in \pl}\Pi_i$ a \emph{joint normal-form plan}, defining a plan $\pi_i \in \Pi_i$ for each player $i \in \pl$.
Moreover, a tuple defining normal-form plans for the opponents of player $i \in \pl$ is denoted as $\pi_{-i} \in \Pi_{-i} \coloneqq \bigtimes_{j \neq i \in \mathcal{P}} \Pi_j$.
%
%
A \textit{normal-form strategy} $\mu_i \in \Delta_{\Pi_i}$ is a probability distribution over $\Pi_i$, where $\mu_i [\pi_i]$ denotes the probability of selecting a plan $\pi_i \in \Pi_i$ according to $\mu_i$.
Moreover, $\mu \in \Delta_{\Pi}$ is a \emph{joint probability distribution} defined over $\Pi$, with $\mu [\pi]$ being the probability that the players end up playing the plans prescribed by $\pi \in \Pi$.

\paragraph{Sequences}
For any player $i \in \pl$, given an infoset $I \in \I_i$ and an action $a \in A(I)$, we denote with $\sigma = (I,a)$ the \emph{sequence} of player $i$'s actions reaching infoset $I$ and terminating with $a$.
Notice that, in EFGs with perfect recall, such sequence is uniquely determined, as paths that reach nodes belonging to the same infoset identify the same sequence of player $i$'s actions.
We let $\Sigma_i \coloneqq \{ (I,a) \mid I \in \I_i , a \in A(I) \} \cup \{ \varnothing_i \}$ be the set of player $i$'s sequences, where $\varnothing_i$ is the empty sequence of player $i$ (representing the case in which she never plays).
Additionally, given an infoset $I \in \I_i$, we let $\sigma(I) \in \Sigma_i$ be the sequence of player $i$'s actions that identify infoset $I$.

\paragraph{Subsets of (joint) normal-form plans}
We now define a few useful subsets of $\Pi_i$. The reader is encouraged to refer to Figure~\ref{fig:notation} for a simple example.
For every player $i \in \pl$ and infoset $I \in \I_i$, we let $\Pi_{i}(I) \subseteq \Pi_i$ be the set of player $i$'s normal-form plans that prescribe to play so as to reach infoset $I$ whenever possible (depending on the opponents' actions up to that point) and \emph{any} action whenever reaching $I$ is \emph{not} possible anymore.
Moreover, for every sequence $\sigma = (I,a) \in \Sigma_i$, we let $\Pi_{i}(\sigma) \subseteq \Pi_{i}(I) \subseteq \Pi_i$ be the set of player $i$'s plans that reach infoset $I$ and recommend action $a$ at $I$.
Similarly, given a terminal node $z \in Z$, we denote with $\Pi_i(z) \subseteq \Pi_i$ the set of normal-form plans by which player $i$ plays so as to reach $z$, while $\Pi(z) \coloneqq \bigtimes_{i \in \pl} \Pi_i(z)$ and $\Pi_{-i}(z) \coloneqq \bigtimes_{j \neq i \in \pl} \Pi_j(z)$.
%
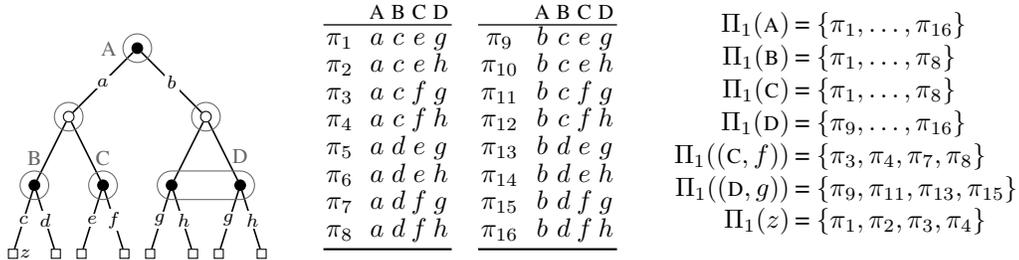
\begin{figure}[H]\centering
	{\hspace{-0.3cm}
		\begin{minipage}[b]{3.5cm}\centering%
			\def\done{.8*1.2}
			\def\dtwo{.40*1.2}
			\def\dleaf{.25*1.2}
			\def\dvert{.8*1.2}
			\begin{tikzpicture}[baseline=-1.1cm,scale=.95]
			\node[fill=black,draw=black,circle,inner sep=.5mm] (A) at (0, 0) {};
			\node[fill=white,draw=black,circle,inner sep=.5mm] (X) at ($(-\done,-\dvert)$) {};
			\node[fill=white,draw=black,circle,inner sep=.5mm] (Y) at ($(\done,-\dvert)$) {};
			\node[fill=black,draw=black,circle,inner sep=.5mm] (B) at ($(X) + (-\dtwo, -\dvert)$) {};
			\node[fill=black,draw=black,circle,inner sep=.5mm] (C) at ($(X) + (\dtwo, -\dvert)$) {};
			\node[fill=white,draw=black,inner sep=.6mm] (l1) at ($(B) + (-\dleaf, -\dvert)$) {};
			\node[fill=white,draw=black,inner sep=.6mm] (l2) at ($(B) + (\dleaf, -\dvert)$) {};
			\node[fill=white,draw=black,inner sep=.6mm] (l3) at ($(C) + (-\dleaf, -\dvert)$) {};
			\node[fill=white,draw=black,inner sep=.6mm] (l4) at ($(C) + (\dleaf, -\dvert)$) {};
			
			\node[inner sep=0] at ($(l1) + (.17,0)$) {\scriptsize$z$};
			
			\node[fill=black,draw=black,circle,inner sep=.5mm] (D1) at ($(Y) + (-\dtwo, -\dvert)$) {};
			\node[fill=black,draw=black,circle,inner sep=.5mm] (D2) at ($(Y) + (\dtwo, -\dvert)$) {};
			\node[fill=white,draw=black,inner sep=.6mm] (l5) at ($(D1) + (-\dleaf, -\dvert)$) {};
			\node[fill=white,draw=black,inner sep=.6mm] (l7) at ($(D1) + (\dleaf, -\dvert)$) {};
			\node[fill=white,draw=black,inner sep=.6mm] (l8) at ($(D2) + (-\dleaf, -\dvert)$) {};
			\node[fill=white,draw=black,inner sep=.6mm] (l10) at ($(D2) + (\dleaf, -\dvert)$) {};

			\draw[semithick] (A) --node[fill=white,inner sep=.9] {\scriptsize$a$} (X);
			\draw[semithick] (A) --node[fill=white,inner sep=.9] {\scriptsize$b$} (Y);
			\draw[semithick] (B) --node[fill=white,inner sep=.9] {\scriptsize$c$} (l1);
			\draw[semithick] (B) --node[fill=white,inner sep=.9] {\scriptsize$d$} (l2);
			\draw[semithick] (C) --node[fill=white,inner sep=.9] {\scriptsize$e$} (l3);
			\draw[semithick] (C) --node[fill=white,inner sep=.9] {\scriptsize$f$} (l4);
			\draw[semithick] (D1) --node[fill=white,inner xsep=0,inner ysep=.9,xshift=-.4] {\scriptsize$g$} (l5);
			\draw[semithick] (D1) --node[fill=white,inner xsep=.2mm,inner ysep=.9,xshift=.6] {\scriptsize$h$} (l7);
			\draw[semithick] (D2) --node[fill=white,inner xsep=0,inner ysep=.9,xshift=-.4] {\scriptsize$g$} (l8);
			\draw[semithick] (D2) --node[fill=white,inner xsep=.2mm,inner ysep=.9,xshift=.6] {\scriptsize$h$} (l10);
			\draw[semithick] (B) -- (X) -- (C);
			\draw[semithick] (D1) -- (Y) -- (D2);
			
			\draw[black!60!white] (X) circle (.2);
			\draw[black!60!white] (Y) circle (.2);
			
			\draw[black!60!white] (A) circle (.2);
			\node[black!60!white]  at ($(A) + (-.4, 0)$) {\textsc{a}};
			
			\draw[black!60!white] (B) circle (.2);
			\node[black!60!white]  at ($(B) + (0,.38)$) {\textsc{b}};
			
			\draw[black!60!white] (C) circle (.2);
			\node[black!60!white]  at ($(C) + (0,.38)$) {\textsc{c}};
			
			\draw[black!60!white] ($(D1) + (0, .2)$) arc (90:270:.2);
			\draw[black!60!white] ($(D1) + (0, .2)$) -- ($(D2) + (0, .2)$);
			\draw[black!60!white] ($(D1) + (0, -.2)$) -- ($(D2) + (0, -.2)$);
			\draw[black!60!white] ($(D2) + (0, -.2)$) arc (-90:90:.2);
			\node[black!60!white]  at ($(D2) + (0, .4)$) {\textsc{d}};
			\end{tikzpicture}
	\end{minipage}}
	\hspace{.5cm}
	{\begin{minipage}[b]{4cm}\centering
			\setlength{\tabcolsep}{1pt}
			\begin{tabular}{c@{\hskip 6pt}cccc}
				& \textsc{a} & \textsc{b} & \textsc{c} & \textsc{d} \\[-1mm]
				\midrule\\[-5mm]
				$\pi_1$ & $a$ & $c$ & $e$ & $g$  \\[-.2mm]
				$\pi_2$ & $a$ & $c$ & $e$ & $h$  \\[-.2mm]
				$\pi_3$ & $a$ & $c$ & $f$ & $g$  \\[-.2mm]
				$\pi_4$ & $a$ & $c$ & $f$ & $h$  \\[-.2mm]
				$\pi_5$ & $a$ & $d$ & $e$ & $g$  \\[-.2mm]
				$\pi_6$ & $a$ & $d$ & $e$ & $h$  \\[-.2mm]
				$\pi_7$ & $a$ & $d$ & $f$ & $g$  \\[-.2mm]
				$\pi_8$ & $a$ & $d$ & $f$ & $h$  \\[-.2mm]
				\bottomrule
			\end{tabular}\quad
			\begin{tabular}{c@{\hskip 6pt}cccc}
				& \textsc{a} & \textsc{b} & \textsc{c} & \textsc{d} \\[-1mm]
				\midrule\\[-5mm]
				$\pi_9$    & $b$ & $c$ & $e$ & $g$  \\[-.2mm]
				$\pi_{10}$ & $b$ & $c$ & $e$ & $h$  \\[-.2mm]
				$\pi_{11}$ & $b$ & $c$ & $f$ & $g$  \\[-.2mm]
				$\pi_{12}$ & $b$ & $c$ & $f$ & $h$  \\[-.2mm]
				$\pi_{13}$ & $b$ & $d$ & $e$ & $g$  \\[-.2mm]
				$\pi_{14}$ & $b$ & $d$ & $e$ & $h$  \\[-.2mm]
				$\pi_{15}$ & $b$ & $d$ & $f$ & $g$  \\[-.2mm]
				$\pi_{16}$ & $b$ & $d$ & $f$ & $h$  \\[-.2mm]
				\bottomrule
			\end{tabular}
	\end{minipage}}
	\hspace{.5cm}
	{\begin{minipage}[b]{4cm}
			\setlength{\tabcolsep}{1pt}
			\begin{tabular}{rl}
				$\Pi_1(\textsc{a})$ &= $\{\pi_1,\dots,\pi_{16}\}$\\[.5mm]
				$\Pi_1(\textsc{b})$ &= $\{\pi_1, \dots, \pi_8\}$\\[.5mm]
				$\Pi_1(\textsc{c})$ &= $\{\pi_1, \dots, \pi_8\}$\\[.5mm]
				$\Pi_1(\textsc{d})$ &= $ \{\pi_9, \dots, \pi_{16}\}$\\[.5mm]
				$\Pi_1((\textsc{c}, f))$ &= $ \{\pi_3, \pi_4, \pi_7, \pi_8\}$\\[.5mm]
				$\Pi_1((\textsc{d}, g))$ &= $ \{\pi_9, \pi_{11}, \pi_{13}, \pi_{15}\}$\\[.5mm]
				$\Pi_1(z)$ &= $\{\pi_1, \pi_2, \pi_3, \pi_4\}$
			\end{tabular}%
	\end{minipage}}
	\caption{(Left) Sample game tree. Black round nodes belong to Player $1$, white round nodes belong to Player $2$, and white square nodes are leaves. Rounded, gray lines denote information sets. (Center) Set $\Pi_1$ of normal-form plans for Player $1$. Each plan identifies an action at each information set. (Right) Examples of certain subsets of $\Pi_1$ defined in this subsection.}\label{fig:notation}
\end{figure}

\paragraph{Additional notation}
For every $i \in \pl$ and $I \in \I_i$, we let $Z(I) \subseteq Z$ be the set of terminal nodes that are reachable from infoset $I \in \I_i$ of player $i$.
%
%
Moreover, $Z(I, a) \subseteq Z(I) \subseteq Z$ is the set of terminal nodes reachable by playing action $a \in A(I)$ at infoset $I $, whereas $Z^c(I,a) \coloneqq Z(I) \setminus Z(I,a)$ is the set of terminal nodes which are reachable by playing an action different from $a$ at $I$.
For any player $i\in\pl$, normal-form plan $\pi_i\in\Pi_i$, infoset $I\in\I_i$, and terminal node $z\in Z$, we define $\rho^{\pi_i}_{I\to z}$ as a function equal to $1$ if $z$ is reachable from $I$ when player $i$ plays according to $\pi_i$, and $0$ otherwise.
Finally, we define a notion of {\em reach} such that, for each normal-form plan $\pi=(\pi_i,\pi_{-i})\in\Pi$, infoset $I\in \I_i$, and terminal node $z\in Z$, we have $\rho^{(\pi_i,\pi_{-i})}_{I\to z}\coloneqq \rho^{\pi_i}_{I\to z} \cdot 
\one[\pi_{-i}\in\Pi_{-i}(z)]$.

\subsection{External and internal regret minimization}


In the {regret minimization framework}~\citep{zinkevich2003online}, each player $i \in \pl$ plays repeatedly against the others by making a series of decisions from a set $\X_i$.
A \emph{regret minimizer} for player $i \in \pl$ is a device that, at each iteration $t = 1, \ldots, T$, supports two operations: (i) \textsc{Recommend}, which provides the next decision $x_i^{t+1} \in \X_i$ on the basis of the past history of play and the observed utilities up to iteration $t$; and (ii) \textsc{Observe}, which receives a utility function $u_i^t : \X_i \to \mathbb{R}$ that is used to evaluate decision $x_i^t$.
A regret minimizer is evaluated in terms of its cumulative regret.
Two types of regret minimizers are commonly studied, depending on the adopted notion of regret, either \emph{external} or \emph{internal} regret. 
%

\paragraph{External regret}
An \emph{external-regret minimizer} $\rme$ for player $i \in \pl$ is a device minimizing the \emph{cumulative external regret} of player $i$ up to iteration $T$, which is defined as:
\begin{equation}\label{eq:ext_regret}
R_i^T \coloneqq \max_{\hat x_i\in \X_i}\mleft\{\sum_{t=1}^T u_i^t(\hat x_i)\mright\}-\sum_{t=1}^Tu_i^t(x_i^t).
\end{equation}
$R_i^T$ represents how much player $i$ would have gained by always taking the best decision in hindsight, given the history of utilities observed up to iteration $T$.
%
%
%

\paragraph{Internal regret}
An \emph{internal-regret minimizer} $\rmi$ for player $i \in \pl$ is a device minimizing the {\em cumulative internal regret} of player $i$ up to iteration $T$, which is defined as:
\begin{equation}\label{eq:int_regret}
\max_{x_i,\hat x_i\in \X_i} R_{i,(x_i,\hat x_i)}^T\coloneqq \max_{x_i,\hat x_i\in \X_i}\mleft\{\sum_{t=1}^T\one[x_i=x_i^t]\mleft( u_i^t(\hat x_i) -u_i^t( x_i)\mright) \mright\}.
\end{equation}
Intuitively, player $i$ has small internal regret if, for each pair of decisions $(x_i,\hat x_i)$, she does not regret of not having played $\hat x_i$ each time she selected $x_i$.
The notion of internal regret is strictly stronger than the notion of external regret: any algorithm with small internal regret also has small external regret, but the converse does not hold (see~\citet{stoltz2005internal} for an example).

Regret minimizers show an interesting connection with games when the decision sets $\X_i$ are the sets of normal-form plans $\Pi_i$ and the observed utilities $u_i^t$ are obtained by playing the game according to the selected plans $\pi_i^t$.
Letting $\pi^t \coloneqq ( \pi_i^t )_{i \in \pl}$ be the joint normal-form plan resulting at each iteration $t = 1,\ldots, T$, we denote with $\{ \pi^t \}_{t =1}^T$ the overall sequence of plays made by the players.
Then, the \emph{empirical frequency of play} $\bar{\mu}^T\in\Delta_\Pi$ generated by $\{ \pi^t \}_{t =1}^T$ is such that for every $\pi \in \Pi$:
\begin{equation}\label{eq:empirical frequency}
\bar \mu^T(\pi) \coloneqq \frac{|\{1\leq t \leq T \mid \pi^t = \pi  \}|}{T}.
\end{equation}
If all the players play according to some external-regret minimizers, then $\bar \mu^T$ approaches the set of (normal-form) coarse correlated equilibria, even in EFGs (see~\citet{cesa2006prediction} and~\citet{celli2019learning} for further details).
%
%
Moreover, \citet{foster1997calibrated} and \citet{hart2000simple} established that the empirical frequency of play generated by any no-internal-regret algorithm (see~\citet{cesa2006prediction} and~\citet{blum2007external} for some examples) converges to the set of correlated equilibria in repeated games with simultaneous moves ({\em i.e.,} normal-form games).
%
    \section{Extensive-form correlated equilibria}\label{sec:equilibria}

The definition of EFCE requires the following notion of trigger agent, which, intuitively, is associated to each player and each of her sequences of action recommendations.

\begin{definition}[Trigger agent for EFCE]
	Given a player $i \in \pl$, a sequence $\sigma = (I,a) \in \Sigma_i$
	%
	%
	, and a probability distribution $\hat \mu_i \in \Delta_{\Pi_i(I)}$, an \emph{$(\sigma, \hat \mu_i)$-trigger agent for player $i$} is an agent that takes on the role of player $i$ and commits to following all recommendations unless she reaches $I$ and gets recommended to play $a$.
	If this happens, the player stops committing to the recommendations and plays according to a plan sampled from $\hat \mu_i$ until the game ends.
\end{definition}

It follows that joint probability distribution $\mu \in \Delta_\Pi$ is an EFCE if, for every $i \in \pl$, player $i$'s expected utility when following the recommendations is at least as large as the expected utility that any $(\sigma, \hat \mu_i)$-trigger agent for player $i$ can achieve (assuming the opponents' do not deviate).

For any $\mu\in\Delta_\Pi$, sequence $\sigma = (I,a) \in \Sigma_i$, and $(\sigma, \hat \mu_i)$-trigger agent, we define the probability of the game ending in a terminal node $z \in Z(I)$ as:
\begin{equation}
p_{\mu,\hat\mu_i}^\sigma(z) \coloneqq \left( \sum_{\substack{\pi_i \in \Pi_i(\sigma)\\\pi_{-i} \in \Pi_{-i}(z)}} \mu (\pi_i, \pi_{-i}) \right) \left( \sum_{\hat \pi_i \in \Pi_i(z)} \hat \mu_i(\hat \pi_i)  \right) p_c(z),
\end{equation}
which accounts for the fact that the agent follows recommendations until she receives the recommendation of playing $a$ at $I$, and, thus, she `gets triggered' and plays according to $\hat \pi_i$ sampled from $\hat \mu_i$ from $I$ onwards.
Moreover, the probability of reaching a terminal node  $z \in Z(I,a)$ when following the recommendations is defined as follows:
\begin{equation}\label{eq:q}
q_\mu(z) \coloneqq \left( \sum_{\pi \in \Pi(z)} \mu(\pi) \right) p_c(z).
\end{equation}
The definition of EFCE reads as follows (see Appendix~\ref{appendix:efce} or the work by~\citet{farina2019correlation} for details):
\begin{definition}[Extensive-form correlated equilibrium]\label{def:efce}
	An EFCE of an EFG is a joint probability distribution $\mu \in \Delta_\Pi$ such that, for every $i \in  \pl$ and $(\sigma, \hat \mu_i)$-trigger agent for player $i$, with $\sigma =(I,a) \in \Sigma_i$, it holds:
	\begin{equation}\label{eq:efce_simp}
	\sum_{z \in Z(I,a)} q_\mu(z) u_i(z) \geq \sum_{z \in Z(I)} p_{\mu,\hat\mu_i}^\sigma(z) u_i(z).
	\end{equation}
\end{definition}

A joint probability distribution $\mu\in\Delta_\Pi$ is said to be an $\epsilon$-EFCE when the {\em maximum deviation} $\delta(\mu)$ under $\mu$ is such that: 
\begin{equation}\label{eq:efce_simp_eps}
\delta(\mu)\defeq \max_{i\in\pl}\max_{\sigma=(I,a)\in\Sigma_i} \mleft\{ \max_{\hat\mu_i \in \Delta_{\Pi_i(I)}} \mleft\{ \sum_{z\in Z(I)} p^{\sigma}_{\mu,\,\hat\mu_i}(z)  u_i(z)\mright\} - \sum_{z\in Z(I,a)} q_{\mu}(z) u_i(z)\mright\}
\le \epsilon.
\end{equation}

    \section{Trigger regret and relationships with EFCE}\label{sec:efce_regret}

In this section, we introduce the notion of \emph{trigger regret}. Intuitively, it measures the regret that each trigger agent has for not having played the best-in-hindsight strategy. As we will show, when each trigger agent has low trigger regret, then the empirical frequency of play is close to being an EFCE.

%

Given a sequence $\{ \pi^t \}_{t=1}^T$, the vector of {\em immediate utilities} ${u}_i^t$ observed by player $i \in \pl$ after any iteration $t = 1,\ldots, T$ is defined as follows.
For every infoset $I \in \I_i$ and action $a \in A(I)$ we have:
\[
u_i^t[I,a] \coloneqq \sum_{z \in Z(I,a) \setminus \bigcup_{J \in \mathcal{C}(I,a) } Z(J) }\one[\pi_{-i}^t\in\Pi_{-i}(z)]\, p_c(z) u_i(z) ,
\]
which represents the utility experienced by player $i$ if the game ends after playing action $a$ at infoset $I$, without going through other player $i$'s infosets and assuming that the other players play as prescribed by the plans $\pi_{-i}^t\in\Pi_{-i}$ at iteration $t$.
Notice that the summation is over the terminal nodes immediately reachable from $I$ by playing $a$ and the payoff of each terminal node is multiplied by the probability of reaching it given chance probabilities.

For $i \in \pl$, the following recursive formula defines player $i$'s utility attainable at infoset $I\in\I_i$ when a normal-form plan $\pi_i \in \Pi_{i}$ is selected:
\begin{equation}\label{eq:v_definition}
V_I^t( \pi_{i}) \coloneqq u^t_{i}[I,\pi_{i}(I)] + \sum_{J \in\mathcal{C}({I,\pi_{i}(I)})} V_{J}^t( \pi_{i }).
\end{equation}

\begin{definition}[Trigger regret]\label{def:trigger_regret}
	For every player $i\in\pl$ and sequence $\sigma=(I,a)\in\Sigma_i$,  we let $R^T_\sigma$ be the {\em trigger regret} for sequence $\sigma$, which we define as follows:
	\begin{equation*}
	R^T_{\sigma}\coloneqq \max_{\hat\pi_i\in\Pi_i(I)} \mleft\{\sum_{t=1}^T
	\one[\pi^t_i\in\Pi_i(\sigma)]\,\Big(V_I^t(\hat\pi_i)- V^t_I(\pi_i^t)\Big)\mright\}.
	\end{equation*}
\end{definition}
The trigger regret for $\sigma=(I,a)$ represents the regret experienced by the trigger agent that gets triggered on sequence $\sigma$, \emph{i.e.}, when infoset $I$ is reached and action $a$ is recommended.
Notice that $R^T_{\sigma}$ only accounts for those iterations in which $\pi^t_i\in\Pi_i(\sigma)$, \emph{i.e.}, intuitively, when the actions prescribed by the normal-form plan $\pi_i^t$ trigger the agent associated to sequence $\sigma$.

The following theorem shows that minimizing the trigger regrets for each player $i\in\pl$ and sequence $\sigma\in\Sigma_i$ allows to approach the set of EFCEs.
\begin{restatable}{theorem}{efceTh}\label{th:efce}
	At all times $T$, the empirical frequency of play $\bar \mu^T$ (Equation~\ref{eq:empirical frequency}) is an $\epsilon$-EFCE, where
    \[
        \epsilon \defeq \max_{i\in\pl}\max_{\sigma\in\Sigma_i} \frac{R^T_{\sigma}}{T}.
    \]
\end{restatable}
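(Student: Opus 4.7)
The plan is to prove that for every player $i \in \pl$, every sequence $\sigma = (I,a) \in \Sigma_i$, and every $\hat\mu_i \in \Delta_{\Pi_i(I)}$, the per-sequence deviation incentive inside the $\max$ of equation~\eqref{eq:efce_simp_eps}, evaluated at $\bar\mu^T$, is bounded by $R^T_\sigma/T$. Taking the maximum over $i$ and $\sigma$ then yields the claim. The key intermediate step is a closed-form ``unrolling'' of the recursive counterfactual value $V_I^t(\pi_i)$ so that it matches what appears in the EFCE condition.

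First I would prove by induction on the depth of the sub-tree below $I$ (using perfect recall so that infosets below $I$ form a DAG rooted at $I$) that
\[
V_I^t(\pi_i) = \sum_{z \in Z(I,\pi_i(I))} \rho^{\pi_i}_{I\to z}\,\one[\pi_{-i}^t \in \Pi_{-i}(z)]\,p_c(z)\,u_i(z).
\]
The base case is a leaf child of $I$ (captured by the difference set defining $u_i^t[I,a]$); the inductive step uses the definition of $\mathcal C(I,a)$ together with the fact that the terminal nodes below $I$ after action $\pi_i(I)$ partition into those reached without passing through any further player-$i$ infoset and those reached through some $J \in \mathcal C(I,\pi_i(I))$.

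Next I would rewrite both sides of the EFCE gap using this formula and the empirical frequency. For the ``obedient'' side, for $z \in Z(I,a)$ the event $\pi^t \in \Pi(z)$ factorises as $\one[\pi_i^t \in \Pi_i(\sigma)]\,\rho^{\pi_i^t}_{I\to z}\,\one[\pi_{-i}^t \in \Pi_{-i}(z)]$, which after substitution and swapping summations gives
\[
\sum_{z\in Z(I,a)} q_{\bar\mu^T}(z)\,u_i(z) = \frac{1}{T}\sum_{t=1}^T \one[\pi_i^t \in \Pi_i(\sigma)]\,V_I^t(\pi_i^t).
\]
For the ``deviation'' side, by linearity in $\hat\mu_i$ the maximum over $\Delta_{\Pi_i(I)}$ is attained at some pure $\hat\pi_i \in \Pi_i(I)$; for such $\hat\pi_i$, the fact that $\hat\pi_i$ reaches $I$ allows us to identify $\one[\hat\pi_i \in \Pi_i(z)]$ with $\rho^{\hat\pi_i}_{I\to z}$ for all $z \in Z(I)$, so the same calculation yields
\[
\sum_{z\in Z(I)} p^{\sigma}_{\bar\mu^T,\hat\mu_i}(z)\,u_i(z) = \frac{1}{T}\sum_{t=1}^T \one[\pi_i^t \in \Pi_i(\sigma)]\,V_I^t(\hat\pi_i).
\]

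Subtracting the two expressions yields exactly $\frac{1}{T}\sum_t \one[\pi_i^t\in\Pi_i(\sigma)]\bigl(V_I^t(\hat\pi_i) - V_I^t(\pi_i^t)\bigr)$, which is at most $R^T_\sigma/T$ by \cref{def:trigger_regret}. Maximising the left-hand side over $\hat\mu_i$ (equivalently over pure $\hat\pi_i \in \Pi_i(I)$), then over $\sigma \in \Sigma_i$ and $i \in \pl$, gives $\delta(\bar\mu^T) \le \max_{i,\sigma} R^T_\sigma/T = \epsilon$, proving the theorem. I expect the main obstacle to be the inductive unrolling of $V_I^t$: one needs to be careful that the difference set $Z(I,a)\setminus\bigcup_{J\in\mathcal C(I,a)} Z(J)$ used in the definition of $u_i^t[I,a]$ correctly partitions $Z(I,a)$ together with the $Z(J)$'s, and that the chance factor $p_c(z)$ and the opponents' reach indicator $\one[\pi_{-i}^t \in \Pi_{-i}(z)]$ propagate unchanged through the recursion (which they do precisely because $V$ isolates player $i$'s contribution).
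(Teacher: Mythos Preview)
Your approach is essentially the same as the paper's: the paper isolates your inductive ``unrolling'' of $V_I^t$ as a separate auxiliary lemma (showing $V_I^t(\hat\pi_i)-V_I^t(\pi_i^t)=\sum_{z\in Z(I)}(\rho_{I\to z}^{(\hat\pi_i,\pi_{-i}^t)}-\rho_{I\to z}^{\pi^t})p_c(z)u_i(z)$) and then matches the two sides of the EFCE gap to the trigger regret exactly as you do, just starting from $R^T_\sigma$ rather than from $\delta(\bar\mu^T)$. One minor point: your argument in fact gives the \emph{equality} $\delta(\bar\mu^T)=\max_{i,\sigma}R^T_\sigma/T$ (since maximizing your subtracted expression over $\hat\pi_i\in\Pi_i(I)$ is precisely $R^T_\sigma/T$), which is what the paper's definition of $\epsilon$-EFCE requires, so you should state equality rather than~$\le$.
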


\begin{restatable}{corollary}{corefceconvergence}\label{cor:efce convergence}
    If $\displaystyle\limsup_{T\to\infty} \max_{i\in\pl}\max_{\sigma\in\Sigma_i}\frac{R^T_\sigma}{T} \le 0,
    $
    then $\displaystyle\limsup_{T\to\infty} \delta(\bar\mu^T) \le 0$, that is, for any $\epsilon > 0$, eventually the empirical frequency of play $\bar{\mu}^T$ becomes an $\epsilon$-EFCE.
\end{restatable}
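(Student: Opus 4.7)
The corollary is an immediate asymptotic restatement of Theorem~\ref{th:efce}, so the plan is essentially to take a limit superior on both sides of the bound supplied by the theorem. First, I would invoke Theorem~\ref{th:efce}, which guarantees that at every round~$T$ the empirical frequency of play $\bar\mu^T$ is an $\epsilon_T$-EFCE for
\[
    \epsilon_T \;\defeq\; \max_{i\in\pl}\max_{\sigma\in\Sigma_i}\frac{R^T_\sigma}{T}.
\]
Unfolding the definition of $\epsilon$-EFCE given in Equation~\eqref{eq:efce_simp_eps}, this is precisely the inequality $\delta(\bar\mu^T) \le \epsilon_T$, which holds pointwise for every $T$.

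Next, I would apply the limit superior as $T\to\infty$ to both sides. Because $\limsup$ is monotone with respect to pointwise inequalities between real-valued sequences, we obtain
\[
    \limsup_{T\to\infty}\,\delta(\bar\mu^T) \;\le\; \limsup_{T\to\infty}\,\max_{i\in\pl}\max_{\sigma\in\Sigma_i}\frac{R^T_\sigma}{T} \;\le\; 0,
\]
where the last inequality is exactly the hypothesis of the corollary. This establishes the first claim $\limsup_{T\to\infty}\delta(\bar\mu^T)\le 0$.

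Finally, I would translate this $\limsup$ bound into the $\epsilon$-EFCE formulation, which is a purely analytic step: by the very definition of limit superior, for every $\epsilon>0$ there exists $T_0$ such that $\delta(\bar\mu^T) < \epsilon$ for all $T\ge T_0$, and hence, by Equation~\eqref{eq:efce_simp_eps}, $\bar\mu^T$ is an $\epsilon$-EFCE from round $T_0$ onwards. There is no genuine obstacle in this proof, since all of the technical content is already carried by Theorem~\ref{th:efce}; the corollary only recasts that quantitative bound in an asymptotic, parameter-free form.
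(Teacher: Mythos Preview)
Your proposal is correct and follows essentially the same approach as the paper: both arguments reduce the corollary to Theorem~\ref{th:efce} by noting that the theorem actually yields the pointwise identity $\delta(\bar\mu^T)=\max_{i\in\pl}\max_{\sigma\in\Sigma_i} R^T_\sigma/T$ and then passing to the limit superior. Your version is in fact slightly more direct, since you apply $\limsup$ straight to this identity, whereas the paper detours through the intermediate equation~\eqref{eq:regret_pq} and a swap of $\limsup$ with $\max$; either way the content is the same.
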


    \section{Laminar regret decomposition for trigger regret}\label{sec:laminar}

In order to design an algorithm minimizing trigger regrets, we first develop a new regret decomposition that extends the {\em laminar regret decomposition} framework introduced by~\citet{farina2018online}.
Our decomposition exploits the structure of the EFG to show that trigger regrets can be minimized by minimizing other suitably defined regret terms which are {\em local} at each infoset.

First, for each player $i\in\pl$, sequence $\sigma=(J,a)\in\Sigma_i$, and infoset $I \in\mathcal{C}^\star(J)$ ({\em i.e.}, any infoset following from $J$, this included),
we define the notion of {\em subtree regret} as follows:
\begin{equation*}
	R_{\sigma,I}^T\coloneqq \max_{\hat \pi_i\in\Pi_i(I)}\mleft\{ \sum_{t=1}^T \one[\pi_i^t\in\Pi_i(\sigma)] \, \Big( V_I^t(\hat \pi_i) - V^t_I(\pi^t_i) \Big)  \mright\} .
\end{equation*}
Each term $R_{\sigma,I}^T$ represents the regret at infoset $I$ experienced by the trigger agent that gets triggered on sequence $\sigma = (J,a)$.
Differently from the trigger regret $R_{\sigma}^T$, which is defined only for the infoset $J$ of $\sigma$, the subtree regrets $R_{\sigma, I}^T$ are defined for all the infosets $I \in \I_i$ such that $J \preceq I$.
%
\begin{remark}
Given player $i\in\pl$, it is immediate to see that, if $R_{\sigma,I}^T=o(T)$ for each $\sigma = (J,a)\in\Sigma_i$ and $I \in \mathcal{C}^\star(J)$, then $R_\sigma^T=o(T)$ for every $\sigma \in \Sigma_i$.
Therefore, we can safely focus on the problem of minimizing subtree regrets, as this will automatically guarantee convergence to an EFCE.
\end{remark}

Next, we need to introduce, for every player $i \in \pl$ and infoset $I \in \I_i$, the following parameterized utility function defined at each iteration $t = 1, \ldots, T$:
\begin{equation}\label{parametrized_util}
\hat u_{I}^t: A(I) \ni a \mapsto u_{i}^t[I,a] + \sum_{J\in\mathcal{C}(I,a)}  V_{J}^t(\pi^t_i),
\end{equation}
which represents the utility that player $i$ gets, at iteration $t$, by playing action $a$ at $I$ and following the actions prescribed by $\pi_i^t$ at the subsequent infosets.
Then, for each sequence $\sigma  =(J,a') \in \Sigma_i$, infoset $I \in \mathcal{C}^\star(J)$, and action $a \in A(I)$, the \emph{laminar subtree regret of action $a$} is defined as:
\begin{equation}\label{internal_laminar_action}
\hat R^T_{\sigma,I,a}\coloneqq \sum_{t=1}^T \one[\pi_i^t\in\Pi_i(\sigma)]\, \Big(\hat u_{I}^t(a) - \hat u_{I}^t(\pi_i^t(I))\Big),
\end{equation}
while, for $\sigma=(J,a')  \in \Sigma_i$ and $I \in \mathcal{C}^\star(J)$, the \emph{laminar subtree regret} is:
\begin{equation}\label{internal_laminar}
\hat R^T_{\sigma,I}\coloneqq \max_{a\in A(I)} \hat R_{\sigma,I,a}^T.
\end{equation}

The following two lemmas show that the subtree regrets can be minimized by minimizing the laminar subtree regrets at all the infosets of the game.

\begin{restatable}{lemma}{subtreeDecomposed}\label{lemma:cum_decomposed}
	The subtree regret for each player $i\in\pl$, sequence $\sigma=(J,a')\in \Sigma_i$, and infoset $I \in \mathcal{C}^\star(J)$ can be decomposed as:
	\begin{equation*}
	R_{\sigma,I}^T=\max_{ a\in A(I)}\mleft\{\hat R^T_{\sigma,I,a}+\sum_{I'\in\mathcal{C}(I, a)} R_{ \sigma,I'}^T\mright\}.
	\end{equation*}
\end{restatable}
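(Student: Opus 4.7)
Fix $i \in \pl$, $\sigma = (J,a') \in \Sigma_i$, and $I \in \mathcal{C}^\star(J)$. My plan is to manipulate the inner maximization defining $R^T_{\sigma,I}$ by (i) expanding one layer of the recursion for $V_I^t$ using the parametrized utility $\hat u_I^t$, and (ii) invoking perfect recall to split the resulting max into independent per-child problems.

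First, I would observe that, by the recursive identity \eqref{eq:v_definition} together with the definition \eqref{parametrized_util} of $\hat u_I^t$, for any $\hat\pi_i \in \Pi_i(I)$ and any iteration $t$ the following per-iteration decomposition holds (obtained by adding and subtracting $\sum_{K\in\mathcal{C}(I,\hat\pi_i(I))} V_K^t(\pi_i^t)$):
\[
V_I^t(\hat\pi_i) - V_I^t(\pi_i^t)
= \bigl(\hat u_I^t(\hat\pi_i(I)) - \hat u_I^t(\pi_i^t(I))\bigr)
+ \sum_{K \in \mathcal{C}(I,\hat\pi_i(I))}\bigl(V_K^t(\hat\pi_i) - V_K^t(\pi_i^t)\bigr).
\]
Substituting this into the definition of $R^T_{\sigma,I}$ and splitting $\max_{\hat\pi_i\in\Pi_i(I)}=\max_{a\in A(I)}\max_{\hat\pi_i\in\Pi_i(I):\,\hat\pi_i(I)=a}$, I would pull the first summand out of the inner maximization, since once $a=\hat\pi_i(I)$ is fixed it no longer depends on $\hat\pi_i$.

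The remaining step is to handle the inner maximization
\[
M(a) \;\defeq\; \max_{\hat\pi_i \in \Pi_i(I):\,\hat\pi_i(I)=a}\sum_{t=1}^T \one[\pi_i^t\in\Pi_i(\sigma)] \sum_{K\in\mathcal{C}(I,a)} \bigl(V_K^t(\hat\pi_i) - V_K^t(\pi_i^t)\bigr).
\]
I would show $M(a) = \sum_{K\in\mathcal{C}(I,a)} R^T_{\sigma,K}$ by two structural observations. First, $V_K^t(\hat\pi_i)$ depends on $\hat\pi_i$ only through its restriction to infosets in $\mathcal{C}^\star(K)$, and by perfect recall the collections $\{\mathcal{C}^\star(K)\}_{K\in\mathcal{C}(I,a)}$ are pairwise disjoint; thus the restrictions of $\hat\pi_i$ to different child subtrees can be optimized independently. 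Second, for any $K\in\mathcal{C}(I,a)$, the set of restrictions to $\mathcal{C}^\star(K)$ of plans $\hat\pi_i\in\Pi_i(I)$ with $\hat\pi_i(I)=a$ coincides with the set of such restrictions of plans in $\Pi_i(K)$: the condition $\hat\pi_i(I)=a$ together with $\hat\pi_i\in\Pi_i(I)$ already forces $K$ to be reached whenever possible (recall that by definition of $\mathcal{C}(I,a)$ no player-$i$ infoset lies strictly between $I$ and $K$), so no further constraint is imposed on actions inside $\mathcal{C}^\star(K)$. Combining these observations with the definition of $R^T_{\sigma,K}$ yields $M(a) = \sum_{K\in\mathcal{C}(I,a)} R^T_{\sigma,K}$, and plugging this back gives the claimed decomposition.

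The main obstacle is the last step: carefully certifying, using perfect recall and the definition of $\Pi_i(I)$, that the inner maximization factorizes over the children and that the per-child maxima coincide with $R^T_{\sigma,K}$. The per-iteration identity and the outer splitting of the max by $a=\hat\pi_i(I)$ are essentially routine algebra once this structural fact is in hand.
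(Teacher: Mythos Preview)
Your proposal is correct and follows essentially the same route as the paper: expand one layer of the recursion for $V_I^t$, split the outer maximization by the action $a=\hat\pi_i(I)$, and then factorize the remaining maximization over the children $K\in\mathcal{C}(I,a)$ to recover the terms $R^T_{\sigma,K}$. The only cosmetic difference is that you add and subtract $\sum_{K}V_K^t(\pi_i^t)$ upfront to write things directly in terms of $\hat u_I^t$, whereas the paper first pulls $V_I^t(\pi_i^t)$ out of the max, expands, and only at the end rewrites via $\hat u_I^t$; your explicit invocation of perfect recall to justify the factorization is a detail the paper leaves implicit.
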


The lemma is proved by recursively applying the  definitions of $R^T_{\sigma,I}$ and $V_I^t(\hat \pi_i)$, and by exploiting~\cref{parametrized_util}. Then,~\cref{lemma:cum_decomposed} is used to show the following.

\begin{restatable}{lemma}{laminarDecomposition}\label{lemma:regret_bound}
	For every player $i\in\pl$, sequence $\sigma =(J,a')\in \Sigma_i$, and infoset $I \in \mathcal{C}^\star(J)$, it holds:
	\begin{equation}
	R^T_{\sigma,I}\leq \max_{\hat\pi_i\in\Pi_{i}(I)}\sum_{I'\in \mathcal{C}^\star(I)} \one[\hat\pi_i \in \Pi_i(I')]\,\hat R_{\sigma,I'}^T.
	\end{equation}
\end{restatable}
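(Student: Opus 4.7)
My plan is to prove the lemma by backward induction on the partial order of player~$i$'s infosets inside the subtree rooted at $I$, starting from the deepest infosets and moving upward. Lemma~\ref{lemma:cum_decomposed} provides the exact recursive decomposition of $R^T_{\sigma,I}$ as a sum of a local term at $I$ (for some optimal action) plus subtree regrets at the children infosets $\mathcal{C}(I,a^*)$, which is the natural hook for induction. The aim at each step is to convert the recursive maximum into a single optimization over plans $\hat\pi_i \in \Pi_i(I)$ by gluing together local maximizers coming from the induction hypothesis.

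For the base case, consider an $I \in \mathcal{C}^\star(J)$ with $\mathcal{C}(I,a)=\emptyset$ for every $a\in A(I)$. Lemma~\ref{lemma:cum_decomposed} then collapses to $R^T_{\sigma,I}=\max_{a\in A(I)}\hat R^T_{\sigma,I,a}=\hat R^T_{\sigma,I}$. Picking any $\hat\pi_i\in\Pi_i(I)$ gives $\one[\hat\pi_i\in\Pi_i(I)]=1$, so the right-hand side is at least $\hat R^T_{\sigma,I}$, which is exactly $R^T_{\sigma,I}$. For the inductive step, let $a^*$ achieve the outer max in Lemma~\ref{lemma:cum_decomposed}, so
\[
R^T_{\sigma,I}=\hat R^T_{\sigma,I,a^*}+\sum_{I'\in\mathcal{C}(I,a^*)} R^T_{\sigma,I'}.
\]
Applying the induction hypothesis at every $I'\in\mathcal{C}(I,a^*)$ yields, for each $I'$, a plan $\hat\pi_i^{I',*}\in\Pi_i(I')$ such that $R^T_{\sigma,I'}\le \sum_{I''\in\I_i}\one[\hat\pi_i^{I',*}\in\Pi_i(I'')]\,\hat R^T_{\sigma,I''}$.

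The central construction is to stitch these local maximizers into a single plan $\hat\pi_i^*\in\Pi_i(I)$ defined as follows: play the unique actions needed to reach $I$ at every ancestor of $I$; play $a^*$ at $I$; for every $I'\in\mathcal{C}(I,a^*)$, agree with $\hat\pi_i^{I',*}$ on the entire subtree rooted at $I'$; and play arbitrarily elsewhere. Perfect recall ensures that the subtrees rooted at the different $I'\in\mathcal{C}(I,a^*)$ are pairwise disjoint, so the gluing is well-defined and consistent with $\hat\pi_i^*\in\Pi_i(I)$. Because each $\hat\pi_i^{I',*}$ must itself satisfy $\hat\pi_i^{I',*}\in\Pi_i(I')\subseteq\Pi_i(I)$, and because $\hat\pi_i^*$ coincides with $\hat\pi_i^{I',*}$ inside the subtree of $I'$, one has $\one[\hat\pi_i^*\in\Pi_i(I'')]=\one[\hat\pi_i^{I',*}\in\Pi_i(I'')]$ for every $I''$ in that subtree, and the contributions from infosets outside $\bigcup_{I'\in\mathcal{C}(I,a^*)}\mathcal{C}^\star(I')\cup\{I\}$ are either forced to match (path from root to $I$) or can be set equal by the arbitrary-play clause. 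The remaining piece is to bound the local term $\hat R^T_{\sigma,I,a^*}\le \max_{a\in A(I)}\hat R^T_{\sigma,I,a}=\hat R^T_{\sigma,I}$ and use $\one[\hat\pi_i^*\in\Pi_i(I)]=1$ to absorb this into the sum, yielding
\[
R^T_{\sigma,I}\le \sum_{I''\in\I_i}\one[\hat\pi_i^*\in\Pi_i(I'')]\,\hat R^T_{\sigma,I''}\le \max_{\hat\pi_i\in\Pi_i(I)}\sum_{I''\in\I_i}\one[\hat\pi_i\in\Pi_i(I'')]\,\hat R^T_{\sigma,I''}.
\]

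I expect the main obstacle to be the plan-gluing argument: verifying that the combined plan $\hat\pi_i^*$ really does activate the right indicators in each child subtree (so that the inductive bounds add up cleanly) and that the disjointness of children's subtrees under $a^*$ prevents any regret term from being counted more than once. Everything else is an essentially mechanical rewriting once Lemma~\ref{lemma:cum_decomposed} is in hand and the action $a^*$ is fixed inside the top-level maximum.
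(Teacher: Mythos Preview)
Your proposal is correct and follows essentially the same route as the paper: both proofs invoke Lemma~\ref{lemma:cum_decomposed} and then unfold it inductively down the subtree rooted at~$I$. The only cosmetic difference is that the paper first applies the split $\max_{a}\{X(a)+Y(a)\}\le \max_a X(a)+\max_a Y(a)$ to obtain the one-line recursion $R^T_{\sigma,I}\le \hat R^T_{\sigma,I}+\max_{a}\sum_{I'\in\mathcal{C}(I,a)}R^T_{\sigma,I'}$ and then says ``apply inductively'', thereby sidestepping the explicit plan-gluing you carry out; your version fixes the maximizing action $a^*$, applies the induction hypothesis at each child, and stitches the witnesses into a single $\hat\pi_i^*$. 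Both arguments are equivalent once one observes that the nested $\max$ over actions along the tree \emph{is} a $\max$ over $\hat\pi_i\in\Pi_i(I)$, so your extra bookkeeping is sound but not strictly needed.
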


    \section{Internal counterfactual regret minimization}\label{sec:icfr}

We propose the {\em internal counterfactual regret minimization} algorithm (\icfr) as a way to minimize the laminar subtree regrets described in the previous section.
At each iteration $t$, \icfr~builds a normal-form plan $\pi_i^t$ in a top-down fashion by sampling an action locally at each infoset, following a simple rule: if the current infoset can be reached through $\pi_i^t$, then an action is sampled according to an internal-regret minimizer; otherwise, an external-regret minimizer is employed.

\begin{wrapfigure}{r}{6.7cm}
	\hfill
	\scalebox{.9}{\begin{minipage}[t]{7.2cm}
			\begin{algorithm}[H]
				\small
				\caption{ICFR (for Player $i$)}
				\begin{algorithmic}[1]
					\Function{\textsc{ICFR}}{$i$}
					\State Initialize the regret minimizers
					\State $t\leftarrow 1$
					\While{$t<T$}
					\State $\pi_i^t\gets\textsc{SampleInternal}$\label{alg:sample}
					\State Observe ${u}_i^t$ ({\em i.e.}, $u_{i}^t[I,a]$ for each pair $(I,a)$\label{line:observe})
					\State $\textsc{UpdateInternal}(\pi_i^t,{u}_i^t)$
					\State $t\gets t+1$
					\EndWhile
					\EndFunction
					\Function{\textsc{SampleInternal}}{}
					\For{$I\in\I_i$ in a top-down order}
					\If{$\pi_i^t\in\Pi_i(I)$}
						\State $\pi_i^t(I)\gets \rmi_I.\textsc{Recommend}()$\label{line:internal}
					\Else
						\State $\sigma_I^t\gets \Sigma_i^c(I)\cap \{(J,\pi_i^t(J))\mid J\preceq I\}$\label{line:sigmaT}
						\State $\pi_i^t(I)\gets \rme_{\sigma_I^t,I}.\textsc{Recommend}()$\label{line:external}
					\EndIf
					\EndFor
					\EndFunction
					\Function{\textsc{UpdateInternal}}{$\pi_i^t,{u}_i^t$}
						\For{$I\in\I_i$}
						\State $\rmi_I.\textsc{Observe}(\one[\pi_i^t\in\Pi_i(I)]\cdot \hat u_I^t)$\label{line:up_int}
						\For{$\sigma\in\Sigma_i^c(I)$}
						\State $\rme_{\sigma,I}.\textsc{Observe}(\one[\pi_i^t\in\Pi_i(\sigma)]\cdot \hat u_I^t)$\label{line:up_ext}
						\EndFor
						\EndFor
					\EndFunction
				\end{algorithmic}%
				\label{alg:icfr}%
			\end{algorithm}
	\end{minipage}}
\end{wrapfigure}

In order to minimize the laminar subtree regrets, \icfr~needs to instantiate different regret minimizers for each infoset.
For every infoset $I\in \I_i$, the algorithm instantiates an internal-regret minimizer $\rmi_I$ employing an arbitrary no-internal-regret algorithm.
Moreover, let $\Sigma_i^c(I)\subseteq \Sigma_i$ be the set of sequences of player $i$ that do not allow to reach $I $ and whose last action is played at an infoset preceding $I$. Formally, 
\[
\Sigma_i^c(I) \coloneqq \{ (J, a)\in\Sigma_i \mid J\preceq I, a\notin \sigma(I) \}.
\]
\icfr~instantiates an additional external-regret minimizer $\rme_{\sigma,I}$ for each sequence $\sigma \in \Sigma_i^c(I)$.
The internal-regret minimizer $\rmi_I$ is responsible for the minimization of the laminar subtree regrets $\hat R_{\sigma, I}^T$ associated to trigger sequences $\sigma=(I,a)\in\Sigma_i$ for each $a\in A(I)$.
Instead, the external-regret minimizers $\rme_{\sigma,I}$ are responsible for the laminar subtree regrets of sequences $\sigma \in \Sigma_i^c(I) $.
%
%
%

\cref{alg:icfr} provides a description of the procedures adopted by~\icfr.
At iteration $t$ and for each $I\in\I_i$, an action is sampled as follows: if the (possibly partial) normal-form plan $\pi_i^t$ sampled up to this point allows $I$ to be reached ({\em i.e.}, it is still possible that $\pi_i^t\in\Pi_i(I)$), then an action is selected according to the internal-regret minimizer $\rmi_I$ (Line~\ref{line:internal}). Otherwise, if $I$ cannot be reached through the (possibly partial) plan $\pi_i^t$, then we let $\sigma_I^t$ be the unique sequence in $\Sigma_i^c(I)$ whose actions are prescribed by $\pi_i^t$ (Line~\ref{line:sigmaT}). In this case, the player follows the strategy recommended by the external-regret minimizer $\rme_{\sigma_I^t,I}$ (Line~\ref{line:external}).
In the update procedure, the regret minimizers are fed with the vectors $\hat u_I^t$, which, with an abuse of notation, denote the vectors whose components are defined by the values of the corresponding parameterized utility functions $\hat u_I^t$ in Equation~\eqref{parametrized_util}.
In particular, for each $I\in\I_i$, the internal-regret minimizer $\rmi_I$ observes the utility vector $\hat u_I^t$ only if the sampled plan $\pi_i^t$ allows to reach infoset $I$, while each external-regret minimizer $\rme_{\sigma,I}$ is updated only if $\pi_i^t$ prescribes all the actions in the corresponding sequence $\sigma$ (Line~\ref{line:up_int}~and~Line~\ref{line:up_ext}, respectively).

The crucial insight is that for each infoset $I \in \I_i$, no matter the action selected at $I$, only one of the regret minimizers will receive a non-zero utility.
Consequently, only one of the regret minimizers can cumulate regret at time $t$, and that is the regret whose recommendation we follow.
Therefore, it is possible to show that the empirical frequency of play $\bar \mu^T$ obtained via \icfr~converges almost surely to an EFCE.
We start with the following auxiliary result.

\begin{restatable}{lemma}{goingUp}\label{lemma:going_up}
	For any $I,J\in \I_i: I\preceq J$, if $\hat R_{\sigma,J}^T=o(T)$ for all $\sigma=(I,a)\in\Sigma_i$ then $\hat R_{\sigma(I),J}^T=o(T)$.
\end{restatable}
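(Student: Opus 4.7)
The plan is to reduce $\hat R^T_{\sigma(I),J}$ to a finite sum of the hypothesized regrets $\hat R^T_{(I,a),J}$ by exploiting the fact that the indicator events $\one[\pi_i^t \in \Pi_i(\sigma(I))]$ and $\{\one[\pi_i^t \in \Pi_i((I,a))]\}_{a \in A(I)}$ are related by a partition of plans at the infoset $I$.

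First, I would verify the set-theoretic identity
\[
\Pi_i(\sigma(I)) \;=\; \Pi_i(I) \;=\; \bigsqcup_{a \in A(I)} \Pi_i((I,a)).
\]
The leftmost equality holds because $\sigma(I) = (I', a')$, where $I'$ is the unique player-$i$ infoset immediately preceding $I$ along the player's tree, and by perfect recall a plan reaches $I'$ and recommends $a'$ there if and only if it reaches $I$ whenever possible. The disjoint-union decomposition follows because any plan that reaches $I$ prescribes exactly one action at $I$, so it belongs to $\Pi_i((I,a))$ for precisely one $a\in A(I)$. Both facts are illustrated by Figure~\ref{fig:notation}; for instance $\Pi_1(\textsc{c}) = \Pi_1((\textsc{c},e)) \sqcup \Pi_1((\textsc{c},f))$.

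Plugging this identity into the definition of laminar subtree regret and swapping the order of summation, I would obtain
\[
\hat R^T_{\sigma(I),J} \;=\; \max_{b \in A(J)} \sum_{a \in A(I)} \sum_{t=1}^T \one[\pi_i^t \in \Pi_i((I,a))]\bigl(\hat u_J^t(b) - \hat u_J^t(\pi_i^t(J))\bigr),
\]
where the inner sum in $a$ has at most one nonzero term per $t$ thanks to the partition. Then applying the elementary inequality $\max_{b} \sum_{a} f_a(b) \le \sum_{a} \max_{b} f_a(b)$ bounds the right-hand side by $\sum_{a\in A(I)} \hat R^T_{(I,a),J}$. Since $|A(I)|$ is finite and each summand is $o(T)$ by hypothesis, the total is $o(T)$, which is exactly the claim.

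There is no substantive obstacle: the argument is purely combinatorial on top of the two preceding steps. The only point that warrants care is the identification $\Pi_i(\sigma(I)) = \Pi_i(I)$, which is a direct consequence of perfect recall and the definitions collected in Section~\ref{sec:equilibria}; once this is in place, pushing the $\max$ through the finite sum over $A(I)$ finishes the proof.
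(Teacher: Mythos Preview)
Your proposal is correct and follows essentially the same route as the paper: both arguments use the partition $\Pi_i(\sigma(I)) = \Pi_i(I) = \bigsqcup_{a\in A(I)} \Pi_i((I,a))$ to rewrite the indicator, then push the $\max$ over $A(J)$ outside the finite sum over $A(I)$ to bound $\hat R^T_{\sigma(I),J}$ by $\sum_{a\in A(I)} \hat R^T_{(I,a),J}$, and conclude from finiteness of $A(I)$. The only cosmetic difference is that the paper presents the chain of inequalities starting from the sum and working down to $\hat R^T_{\sigma(I),J}$, whereas you start from $\hat R^T_{\sigma(I),J}$ and bound upward.
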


Then, our main result reads as follows:

\begin{restatable}{theorem}{icfrTh}\label{th:icfr}
	When all the players play according to \emph{\icfr}, $\bar \mu^T$ converges almost surely to an EFCE.
\end{restatable}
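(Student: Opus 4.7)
The plan is to chain together Corollary~\ref{cor:efce convergence}, the Remark in Section~\ref{sec:laminar}, Lemma~\ref{lemma:regret_bound}, Lemma~\ref{lemma:going_up}, and the no-regret guarantees of the regret minimizers instantiated by \icfr. By Corollary~\ref{cor:efce convergence}, it suffices to show that for every player $i \in \pl$ and every sequence $\sigma \in \Sigma_i$ the trigger regret satisfies $R^T_\sigma / T \to 0$ almost surely. The Remark reduces this to showing that every subtree regret $R^T_{\sigma,I}$ is $o(T)$ a.s., for all $\sigma = (J,a') \in \Sigma_i$ and $I \in \mathcal{C}^\star(J)$, and Lemma~\ref{lemma:regret_bound} reduces it further to showing that every laminar subtree regret $\hat R^T_{\sigma, I'}$ arising in the decomposition is $o(T)$ a.s.

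The heart of the proof is to establish two families of laminar bounds directly from \icfr. For every $I \in \I_i$, the quantity $\max_{a \in A(I)} \hat R^T_{(I,a),I}$ coincides with the cumulative internal regret of $\rmi_I$, and hence is $o(T)$ a.s.; and for every $I \in \I_i$ and $\sigma \in \Sigma_i^c(I)$, $\hat R^T_{\sigma,I}$ coincides with the cumulative external regret of $\rme_{\sigma,I}$, and hence is $o(T)$ a.s. The key observation in both cases is that on any iteration in which the corresponding regret minimizer is not queried by \textsc{SampleInternal}, the indicator multiplying $\hat u_I^t$ in its \textsc{Observe} call vanishes, so that iteration contributes zero to its cumulative regret; and on any iteration in which it is queried, the player's action $\pi_i^t(I)$ is exactly the minimizer's recommendation. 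Matching the surviving terms to Equations~\eqref{internal_laminar_action} and~\eqref{internal_laminar} yields both identifications.

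The remaining laminar regrets $\hat R^T_{\sigma, I'}$ correspond to trigger sequences $\sigma = (J,a')$ with $J \prec I'$ and $a'$ lying on the path toward $I'$. For such $\sigma$, one has $\sigma = \sigma(I'')$ for an infoset $I''$ with $J \prec I'' \preceq I'$, so Lemma~\ref{lemma:going_up} bounds $\hat R^T_{\sigma, I'}$ in terms of $\{\hat R^T_{(I'',b),I'}\}_{b \in A(I'')}$. Each of those terms either belongs to the external-regret family above (when $(I'', b) \in \Sigma_i^c(I')$) and is thus already $o(T)$ a.s., or is again of the on-path form, in which case I recurse one more infoset along the chain from $I''$ down to $I'$. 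The recursion has length at most $|\I_i|$ and terminates in the internal-regret family at $I'' = I'$. Plugging these bounds back through Lemma~\ref{lemma:regret_bound}, the Remark, and Corollary~\ref{cor:efce convergence} completes the argument.

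The delicate step is the bookkeeping in the second paragraph: one must verify that the iterations on which each regret minimizer is skipped coincide exactly with the iterations on which its indicator vanishes (using $\Pi_i(I) = \Pi_i(\sigma(I))$ for the internal case, and the uniqueness of the diverging sequence $\sigma_I^t \in \Sigma_i^c(I)$ for the external case), and that on the surviving iterations the action $\pi_i^t(I)$ stored in the sampled plan is precisely the recommendation of the minimizer whose regret is being tracked. A secondary technical point is the randomness from sampling $\pi_i^t$; this is absorbed into the standard almost-sure no-regret guarantees of the underlying algorithms on bounded utility sequences, which then propagate deterministically through all the inequalities above.
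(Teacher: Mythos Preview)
Your proposal is correct and follows essentially the same route as the paper: partition the laminar subtree regrets $\hat R^T_{\sigma,I'}$ into the three groups (trigger at $I'$ itself, trigger off the path to $I'$, trigger on the path to $I'$), identify the first two groups with the cumulative internal and external regrets of $\rmi_{I'}$ and $\rme_{\sigma,I'}$ respectively (using that the indicator zeros out the observed utility on rounds where the minimizer is not queried), and handle the third group recursively via Lemma~\ref{lemma:going_up}. The only cosmetic difference is that you invoke Corollary~\ref{cor:efce convergence} rather than Theorem~\ref{th:efce} directly, and you are a bit more explicit than the paper about the almost-sure aspect of the sampled-action regret guarantees.
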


\paragraph{Example}

\begin{wrapfigure}{r}{6cm}
	\hfill
	\scalebox{.8}{
		{\begin{minipage}[b]{3cm}\centering%
				\def\done{.7*1.2}
				\def\dtwo{.40*1.2}
				\def\dleaf{.25*1.2}
				\def\dvert{.8*1.2}
				\begin{tikzpicture}[baseline=-1.1cm,scale=.95]
				\node[fill=black,draw=black,circle,inner sep=.5mm] (A) at (0, 0) {};
				\node[fill=black,draw=black,circle,inner sep=.5mm] (X) at ($(-\done,-\dvert)$) {};
				\node[fill=white,draw=black,inner sep=.5mm] (Y) at ($(\done,-\dvert)$) {};
				\node[fill=white,draw=black,inner sep=.5mm] (B) at ($(X) + (-\dtwo, -\dvert)$) {};
				\node[fill=white,draw=black,inner sep=.5mm] (C) at ($(X) + (\dtwo, -\dvert)$) {};
				
				\draw[semithick] (A) --node[fill=white,inner sep=.9] {\scriptsize$a$} (X);
				\draw[semithick,teal] (A) --node[fill=white,inner sep=.9] {\scriptsize$b$} (Y);
				\draw[semithick] (X) --node[fill=white,inner sep=.9] {\scriptsize$c$} (B);
				\draw[semithick] (X) --node[fill=white,inner sep=.9] {\scriptsize$d$} (C);
				
				\draw[purple!60!white] (X) circle (.2);
				\node[purple!90!white]  at ($(X) + (0,.38)$) {$J$};
				
				\draw[blue!60!white] (A) circle (.2);
				\node[blue!90!white]  at ($(A) + (0,.38)$) {$I$};
				\end{tikzpicture}
		\end{minipage}}
		\hspace{.2cm}
		{\begin{minipage}[b]{4cm}\centering
				\setlength{\tabcolsep}{1pt}
				\begin{tabular}{c@{\hskip 6pt}cccc}
					&\multicolumn{4}{c}{Trigger sequence} \\
					& $(I,a)$ & $(I,b)$ & $(J,c)$ & $(J,d)$ \\[-.2mm]
					\midrule\\[-3mm]
					$I$  & \textcolor{blue}{$\hat R_{a,I}^T$} & \textcolor{blue}{$\hat R_{b,I}^T$} & $\times$ & $\times$  \\[1mm]
					$J$  & $\hat R_{a,J}^T$ & \textcolor{teal}{$\hat R_{b,J}^T$} & \textcolor{purple}{$\hat R_{c,J}^T$} & \textcolor{purple}{$\hat R_{d,J}^T$}  \\[-.2mm]
					\bottomrule
				\end{tabular}
	\end{minipage}}}
	\caption{(Left) EFG with two infosets $I$ and $J$ of player $i$.
		(Right) The laminar subtree regrets.
	}
	\label{fig:example_icfr}
\end{wrapfigure}

We provide a simple example illustrating the key ideas of ICFR.
Figure~\ref{fig:example_icfr}--Left describes an EFG with two infosets $I,J$ of the same player (player $i$). 
Even in such a simple setting \icfr~has to ensure that six laminar subtree regrets are properly minimized (see Figure~\ref{fig:example_icfr}--Right).
To simplify the notation, throughout the example we write $\hat R_{a,I}^T$ in place of $\hat R_{(I,a),J}^T$ (the remaining regrets are treated analogously).
\icfr~instantiates one internal-regret minimizer for each infoset of player $i$. We denote them by $\rmi_I$ and $\rmi_J$, respectively. 
Then, we observe that $\Sigma_i^c(J)=\{(I,b)\}$, because $b$ is the only action of player $i$ satisfying the following conditions: (i) it departs from an infoset which is on the path from the root node to $J$ and (ii) if player $i$ selected $b$ at infoset $I$, she would no longer be able to reach $J$.
Therefore, \icfr~instantiates the external-regret minimizer $\rme_{b,J}$.

Suppose to be at iteration $t$ of~\icfr.
The sampling procedure starts from infoset $I$. Being the root of the EFG, $I$ is always reached by player $i$. 
Therefore, an action is selected following the recommendation of the internal-regret minimizer $\rmi_I$.
During the update procedure, $\rmi_I$ is provided with the utility resulting from the normal-form plan $\pi_i^t$ obtained from the sampling procedure.
Intuitively, this ensures that $\hat R_{a,I}^T$ and $\hat R_{b,I}^T$ are small.
Now, there are two possibilities: 

\textbf{Case $\pi_i^t(I)=a$.}
The partial plan $\pi_i^t$ allows $J$ to be reached. 
Therefore, at $J$, an action is chosen according to the strategy recommended by $\rmi_{J}$.
Then, in the update procedure, the internal-regret minimizer $\rmi_{J}$ is provided with the observed utility, while the external-regret minimizer is not updated.
This ensures that $\hat R_{c,J}^T$ and $\hat R_{d,J}^T$ are managed properly.
By Equation~\ref{internal_laminar}, the choice at $t$ does not impact $\hat R_{b,J}^T$ since $\pi_i^t\notin\Pi_i(I,b)$, while $\hat R_{a,J}^T$ is affected by the choice at $J$ because $a\in\sigma(J)$.
The internal-regret minimizer $\rmi_{J}$ guarantees that $\hat R_{c,J}^T=o(T)$ and $\hat R_{d,J}^T=o(T)$. 
Then, by using Lemma~\ref{lemma:going_up}, we have that $\hat R_{a,J}^T=o(T)$ holds as well.

\textbf{Case $\pi_i^t(I)=b$.}
We have that $\sigma_J^t=(I,b)$. An action at $J$ is sampled according to the external-regret minimizer $\rme_{b,J}$, which is then provided with the observed utility (the internal-regret minimizer $\rmi_{J}$ is not updated).
This ensures that the increase in $\hat R_{b,J}^T$ is small.
The other regret terms are not impacted by the choice at $t$.

    \section{Experimental evaluation}

We evaluate the convergence of~\icfr~on the standard benchmark games for the computation of correlated equilibria. We use parametric instances from four different multi-player games:  Kuhn poker~\citep{kuhn1950simplified}, Leduc poker~\cite{southey2005bayes}, Goofspiel~\citep{ross1971goofspiel}, and Battleship~\citep{farina2019correlation}.
Instances of the Kuhn, Leduc, and Goofspiel games are parametric in the number of players $p$ and in the number of card ranks $r$.
To increase the readability, we denote by \textsc{K}$p$.$r$ the Kuhn poker instance with $p$ players and $r$ ranks (the other instances are treated analogously). 
Our Battleship instance (denoted by \textsc{Bs}) has a grid of size $2\times 2$ and maximum number of rounds per player equal to $3$.
%
%
%
A detailed description of the games is provided in~\cref{subsec:exp_games}.
We use {\em Regret matching}~\cite{hart2000simple} for external-regret minimizers, and the no-internal-regret algorithm by~\citet{blum2007external} for internal-regret minimizers. All experiments are run on a 64-core machine with 512 GB of RAM.

\paragraph{Convergence of~\icfr}
\cref{fig:exp}--Center displays the maximum deviation $\delta(\bar\mu^T)$ as a function of the number of rounds $T$.
According to~\cref{eq:efce_simp_eps}, the strategy $\bar\mu^T$ is guaranteed to be a $\delta(\bar\mu^T)$-EFCE. 
We set a maximum number of $10^4$ iterations and, for each instance, we provide the average and the standard deviation computed over $50$ different seeds.
First, we notice that \icfr~attains roughly an empirical convergence rate of $O(1/T)$. The performance over the Battleship instance suggests that equilibria with large support size are significantly more challenging to be computed.
Second, we remark that, unlike recent algorithms for computing EFCEs by~\citet{farina2019correlation,farina2019efficient},~\icfr~can be applied to games with more than two players including chance.
Moreover, since $\textnormal{EFCE} \subseteq\textnormal{EFCCE}\subseteq\textnormal{NFCCE}$,~\icfr~also provides a flexible way to compute $\epsilon$-EFCCEs and $\epsilon$-NFCCEs. 
In the former case, the only known algorithm can only handle games with two players and no chance~\cite{farina2019coarse}. In the latter case, the recent algorithms by~\citet{celli2019learning} are significantly outperformed.
For example, previous algorithms cannot reach a $0.1$-NFCCE in less than $24$h on a Leduc instance with $1200$ total infosets and a one-bet maximum per bidding round.~\icfr~reaches $\epsilon=0.1$ in around $9$h on an arguably more complex Leduc instance ({\em i.e.}, more than $9$k total infosets and a two-bet maximum per round).
Further details on the computation of EFCCEs and NFCCEs are provided in~\cref{subsec:more_exp}, together with the plots of the decoupled EFCE deviations of each player.

\paragraph{Social Welfare}
\cref{fig:exp}--Right provides a visual depiction of the {\em quality} of the solutions attained by \icfr~in terms of their social welfare.
The figure displays the payoffs obtained for $100$ different seeds in a two-player Goofspiel instance without chance ({\em i.e.,} the prize deck is sorted).

\begin{figure}
	\centering
	\hspace{-12mm}
	{\begin{minipage}{4cm}\centering
			\small
			\setlength{\tabcolsep}{1pt}
			\vspace{-5mm}
			\begin{tabular}{c@{\hskip 6pt}|c|cc}
				& \textbf{Pl.} & \textbf{Info.} & \textbf{Seq.} \\[-.5mm]
				\midrule\\[-4mm]
				$\textsc{K}3.3$ & $3$ & $12$ & $25$   \\[-.2mm]
				$\textsc{K}3.4$ & $3$ & $16$ & $33$   \\[-.2mm]
				$\textsc{G}2.3$ & $2$ & $213$ & $262$   \\[-.2mm]
				$\textsc{G}2.4$ & $2$ & $8716$ & $10649$   \\[-.2mm]
				$\textsc{G}3.3$ & $3$ & $837$ & $934$   \\[-.2mm]
				$\textsc{L}3.3$ & $3$ & $3294$ & $7687$   \\[-.2mm]
				\midrule
				\multirow{2}{*}{\textsc{Bs}} & \multirow{2}{*}{$2$} & 1413 & 2965 \\
				&& 1873 & 4101 \\
				\bottomrule
			\end{tabular}
	\end{minipage}}
	\hspace{-4mm}
	{\begin{minipage}{4cm}
			\vspace{-.1mm}
			\begin{tikzpicture}
			\node[anchor=south west,inner sep=0] (image) at (0,0){\includegraphics[width=1.5\textwidth]{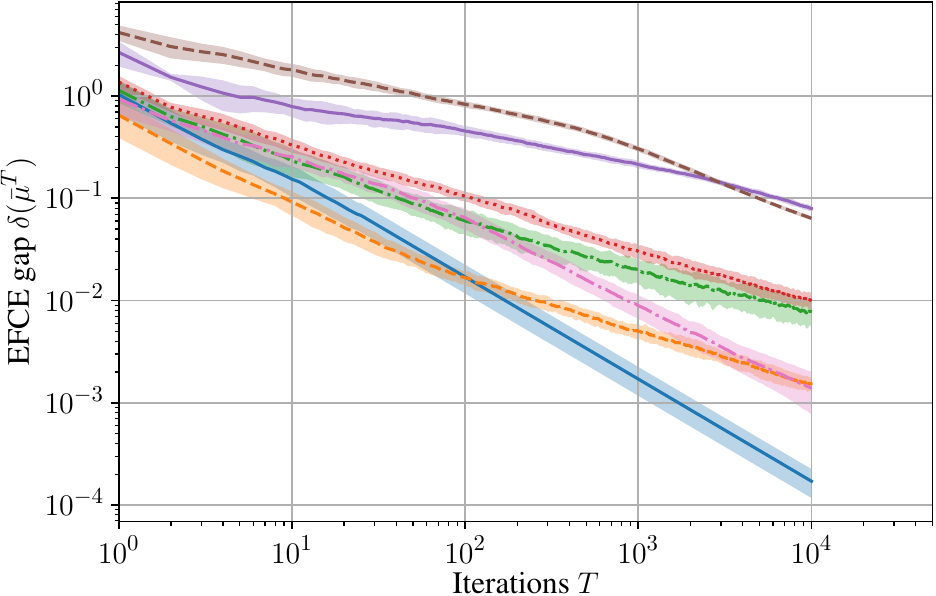}};
			
			\begin{scope}[
			x={(image.south east)},
			y={(image.north west)}
			]
			\node [black, font=\scriptsize] at (.91,0.66) {\textsc{Bs}};
			\node [black, font=\scriptsize] at (.92,0.6) {$\textsc{L}3.3$};
			\node [black, font=\scriptsize] at (.92,0.52) {$\textsc{G}2.4$};
			\node [black, font=\scriptsize] at (.92,0.46) {$\textsc{G}2.3$};
			\node [black, font=\scriptsize] at (.92,0.30) {$\textsc{G}3.3$};
			\node [black, font=\scriptsize] at (.92,0.35) {$\textsc{K}3.4$};
			\node [black, font=\scriptsize] at (.92,0.18) {$\textsc{K}3.3$};
			\end{scope}
			
			\end{tikzpicture}
	\end{minipage}}
	\hspace{20mm}
	\begin{minipage}{4cm}
		\vspace{-.1mm}
		\includegraphics[width=1.15\textwidth]{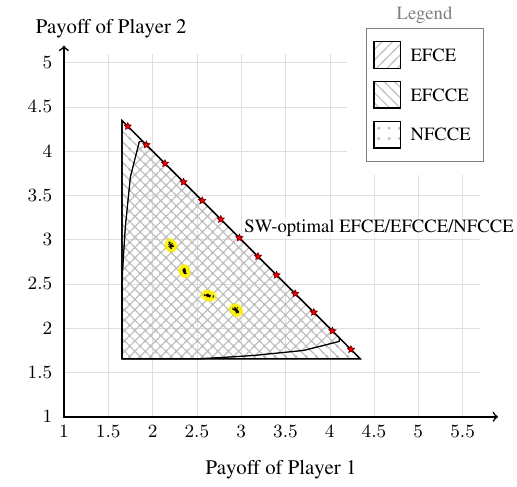}
	\end{minipage}
	\caption{(Left) Dimension of the game instances in terms of number of players and infosets/sequences for each player. (Center) Convergence of~\icfr. (Right) Social welfare attained at different $\epsilon$-EFCEs computed via~\icfr~(black dots corresponds to different seeds).}
	\label{fig:exp} 
\end{figure}

    \section*{Broader Impact}

Correlated equilibria provide an appropriate solution concept for coordination problems in which agents have arbitrary utilities, and may work towards different objectives. 
The study of uncoupled dynamics converging to correlated equilibria in problems with sequential actions and hidden information lays new theoretical foundations for multi-agent reinforcement learning problems. 
Most of the work in the multi-agent reinforcement learning community either studies fully competitive settings, where agents play selfishly to reach a Nash equilibrium, or fully cooperative scenarios in which agents have the exact same goals. Our work could enable techniques that are in-between these two extremes: agents have arbitrary objectives, but coordinate their actions towards an equilibrium with some desired properties. 

As we argued in the paper, the social welfare that can be attained via a Nash equilibrium (that is, by playing selfishly) may be significantly lower than what can be achieved via a correlated equilibrium. We provided some empirical evidences that \icfr~computes equilibria which attain a social welfare `not too far' from the optimal one. This could have an arguably positive societal impact when applied to real economic problems. 
However, further research in this direction is required to prevent `winner-takes-all' scenarios in problems with an unbalanced reward structure where equilibria with high social welfare may just award players with the largest utilities at the expense of the others. This could provide a way to reach {\em fair} equilibria both in theory and in practice.
    
    \begin{ack}
    	
    	This work is based on work supported by the Italian MIUR PRIN 2017 Project ALGADIMAR ``Algorithms, Games, and Digital Market'', the National Science Foundation under grants IIS-1718457, IIS-1617590, IIS-1901403, and CCF-1733556, and the ARO under awards W911NF-17-1-0082 and W911NF2010081. Gabriele Farina is supported by a Facebook fellowship.
    \end{ack}

    \bibliographystyle{plainnat}
    \bibliography{refs}

	\clearpage
    \appendix
    \addtolength{\hoffset}{-1cm}
    \addtolength{\textwidth}{2cm}
    \addtolength{\hsize}{2cm}
    \addtolength{\linewidth}{2cm}
    \addtolength{\columnwidth}{2cm}
    \appendix
\section{Extensive-form correlated equilibrium}\label{appendix:efce}

In the context of EFGs, the two most widely adopted notions of correlated equilibrium are the {\em normal-form correlated equilibrium} (NFCE)~\citep{aumann1974subjectivity} and the {\em extensive-form correlated equilibrium} (EFCE)~\citep{von2008extensive}.
In the former, the mediator draws and recommends a complete normal-form plan to each player before the game starts.
Then, each player decides whether to follow the recommended plan or deviate to an arbitrary strategy she desires.
In an EFCE, the mediator draws a normal-form plan for each player before the beginning of the game, but she does not immediately reveal it to each player.
Instead, the mediator incrementally reveals individual moves as players reach new infosets.
At any infoset, the acting player is free to deviate from the recommended action, but doing so comes at the cost of future recommendations, which are no longer issued if the player deviates.

In an EFCE, players know less about the normal-form plans that were sampled by the mediator than in an NFCE, where the whole normal-form plan is immediately revealed.
Therefore, by exploiting an EFCE, the mediator can more easily incentivize players to follow strategies that may hurt them, as long as players are indifferent as to whether or not to follow the recommendations.
This is beneficial when the mediator wants to maximize, {\em e.g.}, the social-welfare of the game.

A {\em coarse} correlated equilibrium enforces protection against deviations which are independent of the recommended move.
{\em Normal-form coarse correlated equilibria} (NFCCEs)~\citep{moulin1978,celli2018computing} and {\em extensive-form coarse correlated equilibria} (EFCCEs)~\citep{farina2019coarse} are the coarse equivalent of NFCE and EFCE, respectively.
For arbitrary EFGs with perfect recall, the following inclusion of the set of equilibria holds:
$\textnormal{NFCE}\subseteq \textnormal{EFCE} \subseteq\textnormal{EFCCE}\subseteq\textnormal{NFCCE}$~\citep{von2008extensive,farina2019coarse}.

Appendix~\ref{subsec:def_efce} provides a suitable formal definition of the set of EFCEs via the notion of {\em trigger agent} (originally introduced by~\citet{gordon2008no}~and~\citet{dudik2009sampling}).
Finally, Appendix~\ref{subsec:computing_correlated_eq} summarizes existing approaches for computing EFCEs.

\subsection{Formal definition of the set of EFCEs}\label{subsec:def_efce}
The definition requires the following notion of trigger agent, which, intuitively, is associated to each player and each of her sequences of action recommendations.

\begin{definition}[Trigger agent for EFCE]
	Given a player $i \in \pl$, a sequence $\sigma = (I,a) \in \Sigma_i$
	%
	%
	, and a probability distribution $\hat \mu_i \in \Delta_{\Pi_i(I)}$, an \emph{$(\sigma, \hat \mu_i)$-trigger agent for player $i$} is an agent that takes on the role of player $i$ and commits to following all recommendations unless she reaches $I$ and gets recommended to play $a$.
	If this happens, the player stops committing to the recommendations and plays according to a plan sampled from $\hat \mu_i$ until the game ends.
\end{definition}

It follows that joint probability distribution $\mu \in \Delta_\Pi$ is an EFCE if, for every $i \in \pl$, player $i$'s expected utility when following the recommendations is at least as large as the expected utility that any $(\sigma, \hat \mu_i)$-trigger agent for player $i$ can achieve (assuming the opponents' do not deviate).

Given $\sigma = (I,a) \in \Sigma_i$, in order to express the expected utility of a $(\sigma, \hat \mu_i)$-trigger agent, it is convenient to define the probability of the game ending in each terminal node $z \in Z$.
Three cases are possible.
In the first one, $z \in Z(I,a)$. 
The probability of reaching $z$ given the joint probability distribution $\mu\in\Delta_\Pi$ and a $(\sigma, \hat \mu_i)$-trigger agent is defined as:
\begin{equation}
p_{\mu,\hat\mu_i}^\sigma(z) \coloneqq \left( \sum_{\substack{\pi_i \in \Pi_i(\sigma)\\\pi_{-i} \in \Pi_{-i}(z)}} \mu (\pi_i, \pi_{-i}) \right) \left( \sum_{\hat \pi_i \in \Pi_i(z)} \hat \mu_i(\hat \pi_i)  \right) p_c(z),
\end{equation}
which accounts for the fact that the agent follows recommendations until she receives the recommendation of playing $a$ at $I$, and, thus, she `gets triggered' and plays according to $\hat \pi_i$ sampled from $\hat \mu_i$ from $I$ onwards.
The second case is $z \in Z^c(I,a)$, which is reached with probability:
\begin{equation}
y_{\mu,\hat\mu_i}^\sigma(z) \coloneqq \left( \sum_{\substack{\pi_i \in \Pi_i(\sigma)\\\pi_{-i} \in \Pi_{-i}(z)}} \mu (\pi_i, \pi_{-i}) \right) \left( \sum_{\hat \pi_i \in \Pi_i(z)} \hat \mu_i(\hat \pi_i)  \right) p_c(z) + \left( \sum_{\pi \in \Pi(z)} \mu(\pi) \right) p_c(z),
\end{equation}
where the first term accounts for the event that $z$ is reached when the agent `gets triggered', while the second term is the probability of reaching $z$ while not being triggered (notice that the two events are independent).
Finally, the third case is when $z \in Z \setminus Z(I)$ and the infoset $I$ is never reached.
Then, the probability of reaching $z$ is defined as:
\begin{equation}\label{eq:q}
q_\mu(z) \coloneqq \left( \sum_{\pi \in \Pi(z)} \mu(\pi) \right) p_c(z).
\end{equation}

By exploiting the above definitions, the definition of EFCE reads as follows.

\begin{definition}[Extensive-form correlated equilibrium]\label{def:efce}
	An EFCE of an EFG is a probability distribution $\mu \in \Delta_\Pi$ such that, for every $i \in  \pl$ and $(\sigma, \hat \mu_i)$-trigger agent for player $i$, with $\sigma =(I,a) \in \Sigma_i$, it holds:
	\begin{equation}\label{eq:efce}
	\sum_{z \in Z} \left( \sum_{\pi \in \Pi(z)} \mu(\pi)  \right) p_c(z) u_i(z) \geq \!\!\! \sum_{z \in Z(I,a)} \!\!\! p_{\mu,\hat\mu_i}^\sigma(z) u_i(z) + \!\!\! \sum_{z \in Z^c(I,a)} \!\!\! y_{\mu,\hat\mu_i}^\sigma(z) u_i(z) +\!\!\! \sum_{z \in Z \setminus Z(I)} \!\!\! q_{\mu}(z) u_i(z).
	\end{equation}
\end{definition}

Noticing that the left-hand side of Equation~\eqref{eq:efce} is equal to $\sum_{z \in Z} q_\mu(z) u_i(z)$ and that $y_{\mu,\hat\mu_i}^\sigma(z) = p_{\mu,\hat\mu_i}^\sigma(z) + q_\mu(z)$, we can rewrite Equation~\eqref{eq:efce} as follows:
\begin{equation}
\sum_{z \in Z(I,a)} q_\mu(z) u_i(z) \geq \sum_{z \in Z(I)} p_{\mu,\hat\mu_i}^\sigma(z) u_i(z).
\end{equation}

A probability distribution $\mu\in\Delta_\Pi$ is said to be an $\epsilon$-EFCE if, for every $i \in  \pl$ and $(\sigma, \hat \mu_i)$-trigger agent for player $i$, with $\sigma =(I,a) \in \Sigma_i$, it holds:
\begin{equation}
\sum_{z \in Z(I,a)} q_\mu(z) u_i(z) \geq \sum_{z \in Z(I)} p_{\mu,\hat\mu_i}^\sigma(z) u_i(z) - \epsilon.
\end{equation}

\subsection{Computation of EFCEs}\label{subsec:computing_correlated_eq}

The problem of computing an optimal EFCE in extensive-form games with more than two players and/or chance moves is known to be \textsf{NP}-hard~\citep{von2008extensive}.
However,~\citet{huang2008computing} show that the problem of finding {\em one} EFCE can be solved in polynomial time via a variation of the {\em Ellipsoid Against Hope} algorithm~\citep{papadimitriou2008,jiang2015polynomial}.
This holds for arbitrary EFGs with multiple players and/or chance moves.
Unfortunately, that algorithm is mainly a theoretical tool, and it is known to have limited scalability beyond toy problems.
\citet{dudik2009sampling} provide an alternative sampling-based algorithm to compute EFCEs.
However, their algorithm is {\em centralized} and based on MCMC sampling which may limit its practical appeal.
Our framework is arguably simpler and based on the classical {\em counterfactual regret minimization} algorithm~\citep{zinkevich2008regret,farina2018online}.
Moreover, our framework is fully {\em decentralized} since each player, at every decision point, plays so as to minimize her internal/external regret.

If we restrict our attention to two-player perfect-recall games without chance moves, than the problem of determining an optimal EFCE can be characterized through a succint linear program with polynomial size in the game description~\citep{von2008extensive}.
In this setting,~\citet{farina2019correlation} show that the problem of computing an EFCE can be formulated as the solution to a bilinear saddle-point problem, which they solve via a subgradient descent method.
Moreover,~\citet{farina2019efficient} design a regret minimization algorithm suitable for this specific scenario. 
In a recent paper, \citet{Farina20:Polynomial} showed that that an optimal EFCE,
EFCCE and NFCCE can be computed in polynomial time in the game size in
two-player general-sum games that satisfy a condition known as \emph{triangle-freeness}. The triangle-freeness condition holds, for example, when all chance moves are \emph{public}, that is, both players observe all chance moves.

\section{Omitted proofs}\label{appendix:proofs}

\subsection{Proofs for Section~\ref{sec:efce_regret}}

The following auxiliary result is exploited in the proof of Theorem~\ref{th:efce}.

\begin{lemma}\label{lemma:rho}
	For every iteration $t = 1, \ldots, T$, player $i\in\pl$, plan $\hat\pi_i\in\Pi_i$, joint plan $\pi^t=(\pi^t_i,\pi^t_{-i})\in\Pi$, and infoset $I\in\I_i$, the following holds:
	\[
	V_I^t(\hat\pi_i) - V_I^t(\pi^t_i)= \sum_{z\in Z(I)}\left( \rho_{I\to z}^{(\hat\pi_{i },\pi^t_{-i})} - \rho^{\pi^t}_{I\to z}\right) p_c(z) u_i(z).
	\]
\end{lemma}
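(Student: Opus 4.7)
The plan is to establish the stronger pointwise identity
\[
V_I^t(\pi_i) \;=\; \sum_{z \in Z(I)} \rho_{I\to z}^{(\pi_i,\pi_{-i}^t)}\, p_c(z)\, u_i(z) \qquad \forall\, \pi_i \in \Pi_i,\ I \in \I_i,
\]
from which the lemma is immediate: applying it once to $\hat\pi_i$ and once to $\pi_i^t$ (noting that $(\pi_i^t,\pi_{-i}^t)=\pi^t$) and subtracting yields exactly the claimed formula, since neither $p_c$ nor $u_i$ depend on which plan is used.

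I would prove the above identity by backward induction on $I\in\I_i$ along the precedence order $\preceq$, using the recursive definition
\[
V_I^t(\pi_i) \;=\; u_i^t[I,\pi_i(I)] \;+\; \sum_{J \in \mathcal{C}(I,\pi_i(I))} V_J^t(\pi_i).
\]
The key structural fact is the disjoint partition
\[
Z(I) \;=\; \bigl(Z(I,\pi_i(I)) \setminus {\textstyle\bigcup_{J \in \mathcal{C}(I,\pi_i(I))}} Z(J)\bigr) \;\sqcup\; \bigsqcup_{J \in \mathcal{C}(I,\pi_i(I))} Z(J) \;\sqcup\; Z^c(I,\pi_i(I)),
\]
whose disjointness uses perfect recall to guarantee that subtrees below distinct children infosets do not overlap. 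Having fixed this partition, I would analyze the three pieces separately.

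On the piece $Z^c(I,\pi_i(I))$, reaching $z$ from $I$ requires player $i$ to take an action at $I$ different from $\pi_i(I)$, hence $\rho_{I\to z}^{\pi_i}=0$ and the corresponding terms in the right-hand side vanish. On the ``local leaves'' piece $Z(I,\pi_i(I))\setminus\bigcup_J Z(J)$, player $i$ takes no further decision along the path to $z$, so $\rho_{I\to z}^{\pi_i}=1$ and the sum collapses to $\sum_z \one[\pi_{-i}^t \in \Pi_{-i}(z)]\, p_c(z)\, u_i(z) = u_i^t[I,\pi_i(I)]$ by the definition of the immediate utility. On each subtree $Z(J)$ with $J\in\mathcal{C}(I,\pi_i(I))$, the fact that $\pi_i$ is forced to play $\pi_i(I)$ (the unique action from $I$ into $J$) gives $\rho_{I\to z}^{\pi_i}=\rho_{J\to z}^{\pi_i}$, and the inductive hypothesis at $J$ rewrites the corresponding sum as $V_J^t(\pi_i)$.

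Adding up the three contributions reconstructs the recursive definition of $V_I^t(\pi_i)$, closing the induction. The main obstacle I anticipate is just the careful verification of the two ``reach-preservation'' equalities, namely $\rho_{I\to z}^{\pi_i}=0$ on $Z^c(I,\pi_i(I))$ and $\rho_{I\to z}^{\pi_i}=\rho_{J\to z}^{\pi_i}$ on $Z(J)$; both hinge on perfect recall and on the fact that the action $\pi_i(I)$ is uniquely determined by $\pi_i$. Once these are in hand, the argument is purely bookkeeping.
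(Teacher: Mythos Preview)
Your proposal is correct and follows essentially the same route as the paper: both first establish the pointwise identity $V_I^t(\pi_i)=\sum_{z\in Z(I)}\rho_{I\to z}^{(\pi_i,\pi_{-i}^t)}p_c(z)u_i(z)$ by recursively unfolding $V_I^t$ (the paper phrases this as ``expanding recursively the terms $V_J^t$'', while you set it up as a formal backward induction with an explicit three-piece partition of $Z(I)$), and then subtract the two instances to obtain the lemma.
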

\begin{proof}
	Given an arbitrary infoset $I\in\I_i$, the set of terminal nodes immediately reachable from $I$ through action $a\in A(I)$ is defined as
	\[
	Z^{\textsc{I}}(I,a)\defeq Z(I,a) \setminus \bigcup_{J \in \mathcal{C}(I,a)} Z(J).
	\]
	
	By expanding $V_I^t(\hat \pi_i)$ according to its definition (Equation~\eqref{eq:v_definition}) and by substituting the definition of immediate utility vector ${u}_i^t$ we obtain that
	\begin{align*}
	V_I^t(\hat \pi_{i}) &= u^t_{i}[I,\hat \pi_{i}(I)] + \sum_{J \in\mathcal{C}({I,\hat \pi_{i}(I)})} V_{J}^t(\hat \pi_{i })\\
	& = \sum_{z \in Z^{\textsc{I}}(I,\hat\pi_i(I)) } \one[\pi_{-i}^t\in\Pi_{-i}(z)]\, p_c(z) u_i(z) + \sum_{J \in\mathcal{C}({I,\hat \pi_{i}(I)})} V_{J}^t(\hat \pi_{i })\\
	& = \sum_{z\in Z(I)} \rho_{I\to z}^{\hat \pi_i}\one[\pi_{-i}^t\in\Pi_{-i}(z)]\, p_c(z) u_i(z),
	\end{align*}
	where the last expression is obtained by expanding recursively the terms $V_{J}^t(\hat \pi_{i })$.
	By definition, $\rho_{I\to z}^{(\hat\pi_{i },\pi_{-i}^t)} = \rho_{I\to z}^{\hat\pi_{i}}\cdot \one[\pi_{-i}^t\in\Pi_{-i}(z)]$.
	Therefore, we can write $V_I^t(\hat \pi_{i})=\sum_{z\in Z(I)}\rho_{I\to z}^{(\hat\pi_{i },\pi_{-i}^t)}p_c(z)u_i(z)$.
	Analogously, by expanding $V_I^t(\pi_{i}^t)$, we obtain that $V_I^t( \pi_{i}^t)=\sum_{z\in Z(I)}\rho_{I\to z}^{\pi^t}p_c(z)u_i(z)$.
	This concludes the proof.
\end{proof}

\efceTh*
\begin{proof}\allowdisplaybreaks
    Fix any player $i \in \pl$ and any sequence $\sigma = (I,a)\in\Sigma_i$ for her. From Lemma~\ref{lemma:rho}, the regret $R^T_\sigma$ is
	\begin{align*}
	R^T_\sigma&=\max_{\hat\pi_i\in\Pi_i(I)}\sum_{t=1}^T\one[\pi^t_i\in\Pi_i(\sigma)]\mleft(\sum_{z\in Z(I)}\left( \rho_{I\to z}^{(\hat\pi_{i },\pi^t_{-i})} - \rho^{\pi^t}_{I\to z}\right) p_c(z) u_i(z)\mright)
	\\&=
	\max_{\hat\pi_i\in\Pi_i(I)}\sum_{t=1}^T \sum_{\pi\in\Pi} \one[\pi=\pi^t] \mleft(\one[\pi_i\in\Pi_i(\sigma)] \mleft(\sum_{z\in Z(I)}\left( \rho_{I\to z}^{(\hat\pi_{i },\pi_{-i})} - \rho^{\pi}_{I\to z}\right) p_c(z) u_i(z)\mright)\mright)
	\\&=
	\max_{\hat\pi_i\in\Pi_i(I)}\sum_{\pi\in\Pi} \one[\pi_i\in\Pi_i(\sigma)] \mleft(\mleft(\sum_{t=1}^T \one[\pi=\pi^t] \mright)  \mleft(\sum_{z\in Z(I)}\left( \rho_{I\to z}^{(\hat\pi_{i },\pi_{-i})} - \rho^{\pi}_{I\to z}\right) p_c(z) u_i(z)\mright) \mright).
	\end{align*}
	By using the definition of empirical frequency of play, we can write $\sum_{t=1}^T \one[\pi=\pi^t] = T \bar{\mu}^T(\pi)$. Hence,
	\begin{align*}
	R^T_\sigma&=
	T \max_{\hat\pi_i\in\Pi_i(I)} \sum_{\pi\in\Pi} \one[\pi_i\in\Pi_i(\sigma)] \mleft(\bar\mu^T(\pi) \mleft(\sum_{z\in Z(I)}\left( \rho_{I\to z}^{(\hat\pi_{i },\pi_{-i})} - \rho^{\pi}_{I\to z}\right) p_c(z) u_i(z)\mright) \mright)
    \\&=
    T \max_{\hat\pi_i\in\Pi_i(I)}\sum_{\substack{\pi_i\in\Pi_i(\sigma)\\\pi_{-i}\in\Pi_{-i}}} \bar\mu^T(\pi) \mleft(\sum_{z\in Z(I)}\left( \rho_{I\to z}^{(\hat\pi_{i },\pi_{-i})} - \rho^{\pi}_{I\to z}\right) p_c(z) u_i(z)\mright)
    \\&=
    T \max_{\hat\pi_i\in\Pi_i(I)}\sum_{z\in Z(I)} \sum_{\substack{\pi_i\in\Pi_i(\sigma)\\\pi_{-i}\in\Pi_{-i}}} \bar\mu^T(\pi) \left( \rho_{I\to z}^{(\hat\pi_{i },\pi_{-i})} - \rho^{\pi}_{I\to z}\right) p_c(z) u_i(z).
    \end{align*}
    Using the definition of the $\rho_{I\to z}$ symbols, that is,
	\[
		\rho_{I\to z}^{(\hat\pi_{i },\pi_{-i})} = \rho_{I\to z}^{\hat\pi_{i}}\cdot \one[\pi_{-i}\in\Pi_{-i}(z)], \qquad
        \rho_{I\to z}^{\pi} = \rho_{I\to z}^{\pi_{i}}\cdot \one[\pi_{-i}\in\Pi_{-i}(z)],
	\]
    we further obtain
    \begin{align*}
	R^T_\sigma &=
    T \max_{\hat\pi_i\in\Pi_i(I)}\sum_{z\in Z(I)} \sum_{\substack{\pi_i\in\Pi_i(\sigma)\\\pi_{-i}\in\Pi_{-i}}} \bar\mu^T(\pi) \left( \rho_{I\to z}^{\hat\pi_{i}} - \rho^{\pi_i}_{I\to z}\right)\one[\pi_{-i}\in\Pi_{-i}(z)] p_c(z) u_i(z)
    \\&=
	T \max_{\hat\pi_i\in\Pi_i(I)}\sum_{z\in Z(I)} \sum_{\substack{\pi_i\in\Pi_i(\sigma)\\\pi_{-i}\in\Pi_{-i}(z)}} \bar\mu^T(\pi) \left( \rho_{I\to z}^{\hat\pi_{i}} - \rho^{\pi_i}_{I\to z}\right)p_c(z) u_i(z)
    \\&=
	T \vast(\underbrace{\max_{\hat\pi_i\in\Pi_i(I)}\sum_{z\in Z(I)} \sum_{\substack{\pi_i\in\Pi_i(\sigma)\\\pi_{-i}\in\Pi_{-i}(z)}} \bar\mu^T(\pi) \rho_{I\to z}^{\hat\pi_{i}}\,p_c(z) u_i(z)}_{\circled{B}}\vast) - T \vast(\underbrace{\sum_{z\in Z(I)} \sum_{\substack{\pi_i\in\Pi_i(\sigma)\\\pi_{-i}\in\Pi_{-i}(z)}} \bar\mu^T(\pi) \rho^{\pi_i}_{I\to z}\,p_c(z) u_i(z)}_{\circled{C}}\vast).
	\end{align*}
	
	We now analyze \circled{B} and \circled{C} separately.
	\begin{itemize}[nolistsep,itemsep=1mm,leftmargin=8mm]
		\item[\circled{B}] By convexity, we have:
            \begin{align*}
	           \circled{B} &= \max_{\hat\mu_i \in \Delta_{\Pi_i(I)}} \mleft\{\sum_{\hat\pi_i \in \Pi_i(I)} \hat\mu_i(\hat\pi_i)\mleft( \sum_{z\in Z(I)} \sum_{\substack{\pi_i\in\Pi_i(\sigma)\\\pi_{-i}\in\Pi_{-i}(z)}} \bar\mu^T(\pi) \rho_{I\to z}^{\hat\pi_{i}}\,p_c(z) u_i(z)\mright)\mright\}\\
                    &= \max_{\hat\mu_i \in \Delta_{\Pi_i(I)}} \mleft\{ \sum_{z\in Z(I)} \mleft(\sum_{\hat\pi_i \in \Pi_i(I)} \hat\mu_i(\hat\pi_i) \rho_{I\to z}^{\hat\pi_{i}} \mright)\mleft( \sum_{\substack{\pi_i\in\Pi_i(\sigma)\\\pi_{-i}\in\Pi_{-i}(z)}} \bar\mu^T(\pi) p_c(z) u_i(z)\mright)\mright\}.
	       \end{align*}
            Since $\hat\pi_i \in \Pi_i(I)$ and $z \in Z(I)$, $\rho_{I\to z}^{\hat\pi_i} = \one[\hat\pi_i \in \Pi_i(z)]$. So,
            \begin{align*}
	           \circled{B} &= \max_{\hat\mu_i \in \Delta_{\Pi_i(I)}} \mleft\{ \sum_{z\in Z(I)} \mleft(\sum_{\hat\pi_i \in \Pi_i(I)} \hat\mu_i(\hat\pi_i)\,\one[\hat\pi_i \in \Pi_i(z)] \mright)\mleft( \sum_{\substack{\pi_i\in\Pi_i(\sigma)\\\pi_{-i}\in\Pi_{-i}(z)}} \bar\mu^T(\pi) p_c(z) u_i(z)\mright)\mright\}\\
                    &= \max_{\hat\mu_i \in \Delta_{\Pi_i(I)}} \mleft\{ \sum_{z\in Z(I)} \mleft(\sum_{\hat\pi_i \in \Pi_i(z)} \hat\mu_i(\hat\pi_i)\mright)\mleft( \sum_{\substack{\pi_i\in\Pi_i(\sigma)\\\pi_{-i}\in\Pi_{-i}(z)}} \bar\mu^T(\pi) \mright) p_c(z) u_i(z)\mright\}\\
                    &= {\max_{\hat\mu_i \in \Delta_{\Pi_i(I)}} \mleft\{ \sum_{z\in Z(I)} p^{\sigma}_{\bar\mu^T\!\!,\,\hat\mu_i}(z) \, u_i(z)\mright\}}.\numberthis{eq:part B}
            \end{align*}

		\item[\circled{C}] Since $\pi_i \in \Pi_i(\sigma) \subseteq \Pi_i(I)$ and $z \in Z(I)$, $\rho_{I\to z}^{\pi_i} = \one[z \in Z(\sigma)]\cdot\one[\pi_i \in \Pi_i(z)]$. Therefore,
		\begin{align*}
		\circled{C} &= \sum_{z\in Z(I)} \sum_{\substack{\pi_i\in\Pi_i(\sigma)\\\pi_{-i}\in\Pi_{-i}(z)}} \bar\mu^T(\pi) \, \one[z \in Z(\sigma)]\one[\pi_i \in \Pi_i(z)]  \,   p_c(z) u_i(z)\\
		&= \sum_{z\in Z(I)} \mleft(\one[z \in Z(\sigma)] \sum_{\substack{\pi_i\in\Pi_i(\sigma)\\\pi_{-i}\in\Pi_{-i}(z)}} \bar\mu^T(\pi) \, \one[\pi_i \in \Pi_i(z)]  \,   p_c(z) u_i(z)\mright)\\
        &= \sum_{z\in Z(\sigma)} \mleft(\sum_{\substack{\pi_i\in\Pi_i(z)\\\pi_{-i}\in\Pi_{-i}(z)}} \bar\mu^T(\pi) \mright) p_c(z) u_i(z)\\
        &= {\sum_{z\in Z(\sigma)} q_{\bar\mu^T}(z) \, u_i(z)}.\numberthis{eq:part C}
		\end{align*}
	\end{itemize}
	
	Substituting the expressions in \eqref{eq:part B} and \eqref{eq:part C} into the expression for $R^T_\sigma$, we obtain
    \begin{equation}\label{eq:regret_pq}
        \frac{R^T_\sigma}{T} = \max_{\hat\mu_i \in \Delta_{\Pi_i(I)}} \mleft\{ \sum_{z\in Z(I)} p^{\sigma}_{\bar\mu^T\!\!,\,\hat\mu_i}(z) \, u_i(z)\mright\} - \sum_{z\in Z(\sigma)} q_{\bar\mu^T}(z) \, u_i(z).
    \end{equation}
    Finally, using the hypothesis, we can write
    \begin{align*}
        \epsilon &= \max_{i\in\pl}\max_{\sigma\in\Sigma_i} \frac{R^T_\sigma}{T} \\
            &= \max_{i\in\pl}\max_{\sigma\in\Sigma_i} \mleft\{ \max_{\hat\mu_i \in \Delta_{\Pi_i(I)}} \mleft\{ \sum_{z\in Z(I)} p^{\sigma}_{\bar\mu^T\!\!,\,\hat\mu_i}(z) \, u_i(z)\mright\} - \sum_{z\in Z(\sigma)} q_{\bar\mu^T}(z) \, u_i(z)\mright\}\\
            &= \delta(\bar\mu^T).
    \end{align*}
    This concludes the proof.
\end{proof}

\corefceconvergence*
\begin{proof}
	By~\cref{eq:regret_pq} we obtain:
    \begin{align*}
     0 &\ge \limsup_{T \to\infty} \frac{R^T_\sigma}{T} \\
       &= \limsup_{T \to\infty} \mleft(\max_{\hat\mu_i \in \Delta_{\Pi_i(I)}} \mleft\{ \sum_{z\in Z(I)} p^{\sigma}_{\bar\mu^T\!\!,\,\hat\mu_i}(z) \, u_i(z)\mright\} - \sum_{z\in Z(\sigma)} q_{\bar\mu^T}(z) \, u_i(z)\mright)\\
       &\ge \max_{\hat\mu_i \in \Delta_{\Pi_i(I)}} \mleft\{ \limsup_{T \to\infty}\sum_{z\in Z(I)} p^{\sigma}_{\bar\mu^T\!\!,\,\hat\mu_i}(z) \, u_i(z)\mright\} - \limsup_{T \to\infty} \sum_{z\in Z(\sigma)} q_{\bar\mu^T}(z)\, u_i(z),
    \end{align*}
    where the last inequality follows from swapping the order of $\limsup$ and $\max$. By definition of $\limsup$, for any $\epsilon > 0$, eventually $\delta(\bar{\mu}^T) < 0$ (more precisely: for any $\epsilon > 0$, there must be a $\tau = \tau(\epsilon)$ such that $\delta(\bar{\mu}^T) < \epsilon$ for all $T \ge \tau$), which means that eventually the empirical frequency of play $\bar{\mu}^T$ becomes an $\epsilon$-EFCE.
\end{proof}

\subsection{Proof for Section~\ref{sec:laminar}}

\subtreeDecomposed*
\begin{proof}\allowdisplaybreaks
	By using the recursive definitions of $R^T_{\sigma,I}$ and $V_I^t(\hat \pi_i)$, we get:
	\begin{align*}
	R_{\sigma,I}^T & = \max_{\hat \pi_i\in\Pi_i(I)}\mleft\{ \sum_{t=1}^T \one[\pi_i^t\in\Pi_i(\sigma)] \mleft( V_I^t(\hat \pi_i) - V^t_I(\pi^t_i) \mright)  \mright\}\\
	& = \max_{\hat \pi_i\in\Pi_i(I)}\mleft\{ \sum_{t=1}^T \one[\pi_i^t\in\Pi_i(\sigma)] V_I^t(\hat \pi_i) \mright\} - \sum_{t=1}^T \one[\pi_i^t\in\Pi_i(\sigma)] V^t_I(\pi^t_i)\\
	& = \max_{\hat \pi_i\in\Pi_i(I)}\mleft\{ \sum_{t=1}^T \one[\pi_i^t\in\Pi_i(\sigma)] \mleft( u^t_{i}[I,\hat \pi_{i}(I)] + \sum_{I' \in\mathcal{C}({I,\hat \pi_{i}(I)})} V_{I'}^t( \hat \pi_{i })\mright) \mright\}\\ 
	&\hspace{8.5cm}- \sum_{t=1}^T \one[\pi_i^t\in\Pi_i(\sigma)] V^t_I(\pi^t_i)\\
	& = \max_{a\in A(I)}\mleft\{ \sum_{t=1}^T\one[\pi_i^t\in\Pi_i(\sigma)] u_i^t[I,a] + \sum_{I'\in\mathcal{C}(I,a)} \max_{\hat\pi_i\in\Pi_i(I')}\mleft\{\sum_{t=1}^T\one[\pi_i^t\in\Pi_i(\sigma)]V_{I'}^t(\hat \pi_i)  \mright\}\mright\}\\
	&\hspace{8.5cm} - \sum_{t=1}^T \one[\pi_i^t\in\Pi_i(\sigma)] V^t_I(\pi^t_i)\\
	& = \max_{a\in A(I)}\mleft\{ \sum_{t=1}^T\one[\pi_i^t\in\Pi_i(\sigma)] u_i^t[I,a] + \sum_{I'\in\mathcal{C}(I,a)}\mleft( R_{\sigma,I'}^T +  \sum_{t=1}^T \one[\pi_i^t\in\Pi_i(\sigma)] V^t_{I'}(\pi_i^t) \mright)\mright\}\\
	&\hspace{8.5cm} - \sum_{t=1}^T \one[\pi_i^t\in\Pi_i(\sigma)] V^t_I(\pi^t_i),
	\end{align*}
	where the last step is by definition of subtree regret.
	By rewriting the above expression according to Equation~\eqref{parametrized_util} we get the result.
\end{proof}

\laminarDecomposition*
\begin{proof}\allowdisplaybreaks
	Consider an arbitrary sequence $\sigma=(J,a') \in \Sigma_i$ and infoset $I \in \mathcal{C}^\star(J)$.
	By Lemma~\ref{lemma:cum_decomposed} we have:
	
	\begin{align*}
	R_{\sigma,I}^T & =\max_{ a\in A(I)}\mleft\{\sum_{t=1}^T\one[\pi_i^t\in\Pi_i(\sigma)]\mleft(\hat u_{I}^t(a) -\hat u_{I}^t(\pi_i^t(I)) \mright)+\sum_{I'\in\mathcal{C}(I, a)} R_{ \sigma,I'}^T\mright\}\\
	& = \max_{ a\in A(I)}\mleft\{\sum_{t=1}^T\one[\pi_i^t\in\Pi_i(\sigma)]\hat u_{I}^t(a)+\sum_{I'\in\mathcal{C}(I, a)} R_{ \sigma,I'}^T\mright\}-\sum_{t=1}^T\one[\pi_i^t\in\Pi_i(\sigma)]\hat u_{I}^t(\pi_i^t(I))\\
	& \leq \max_{ a\in A(I)}\mleft\{\sum_{t=1}^T\one[\pi_i^t\in\Pi_i(\sigma)]\hat u_{I}^t(a)\mright\}+\max_{ a\in A(I)}\mleft\{\sum_{I'\in\mathcal{C}(I, a)} R_{ \sigma,I'}^T\mright\}-\sum_{t=1}^T\one[\pi_i^t\in\Pi_i(\sigma)]\hat u_{I}^t(\pi_i^t(I))\\
	& = \hat R_{\sigma,I}^T +\max_{ a\in A(I)}\mleft\{\sum_{I'\in\mathcal{C}(I, a)} R_{ \sigma,I'}^T\mright\}.
	\end{align*}
	By starting from $I$ and applying the above equation inductively, we obtain the result.
\end{proof}

\subsection{Proofs for Section~\ref{sec:icfr}}

\begin{lemma}\label{lemma:unique_seq}
	For any $I\in\I_i$ and $t = 1 , \ldots , T$, if it is the case that $\pi_i^t\notin\Pi_i(I)$, then the sequence $\sigma_I^t$ defined by \textsc{SampleInternal} exists and is unique.
\end{lemma}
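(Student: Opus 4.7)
My plan is to prove the claim by induction on the top-down order in which \textsc{SampleInternal} processes the infosets of player $i$. The key structural fact I rely on is that, under perfect recall, the infosets preceding $I$ form a totally ordered chain $J_0 \prec J_1 \prec \cdots \prec J_{k-1} \prec J_k = I$, with a unique ``path action'' $a_\ell^\star \in A(J_\ell)$ at each $J_\ell$ ($\ell < k$) that advances the trajectory toward $I$; by the definition of $\Pi_i(I)$, the hypothesis $\pi_i^t \notin \Pi_i(I)$ is equivalent to $\pi_i^t(J_\ell) \neq a_\ell^\star$ for at least one index $\ell < k$. Let $\ell^*$ be the smallest such index and write $J^* \coloneqq J_{\ell^*}$.

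For existence, the pair $(J^*, \pi_i^t(J^*))$ satisfies $J^* \prec I$ and $\pi_i^t(J^*) \neq a_{\ell^*}^\star$, so it belongs to $\Sigma_i^c(I)$ and trivially to $\{(J, \pi_i^t(J)) : J \preceq I\}$, making the intersection that defines $\sigma_I^t$ non-empty.

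For uniqueness, I need to argue that no other ancestor $J_\ell$ in the chain contributes to the intersection. Indices $\ell < \ell^*$ are eliminated directly by the minimality of $\ell^*$: there we have $\pi_i^t(J_\ell) = a_\ell^\star$, so $(J_\ell, \pi_i^t(J_\ell)) \notin \Sigma_i^c(I)$. The subtler range $\ell^* < \ell < k$ is the main obstacle, and my plan is to address it through the inductive hypothesis applied to each such $J_\ell$. Concretely: when $J_\ell$ is processed in the top-down order, the initial segment of the path to $J_\ell$ coincides with the path to $I$ up to $J^*$, so $\pi_i^t(J^*)$ is not the action leading toward $J_\ell$ either, forcing $\pi_i^t \notin \Pi_i(J_\ell)$; applying the induction, $\sigma_{J_\ell}^t$ is uniquely defined and equals $\{(J^*, \pi_i^t(J^*))\}$, meaning that the trajectory effectively sampled by $\pi_i^t$ abandons the ancestor chain of $I$ precisely at $J^*$. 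Interpreting the set $\{(J, \pi_i^t(J)) : J \preceq I\}$ in \textsc{SampleInternal} as recording only the sequences that $\pi_i^t$ actually plays along the way up to $I$, no pair $(J_\ell, \pi_i^t(J_\ell))$ with $\ell > \ell^*$ appears in the intersection, and $(J^*, \pi_i^t(J^*))$ is therefore the unique element of $\sigma_I^t$. I expect the most delicate step to be formalizing this ``trajectory'' invariant of \textsc{SampleInternal}, since a purely literal reading of the intersection over all ancestors would not rule out later intermediate $J_\ell$'s on its own.
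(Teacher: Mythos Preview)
Your core identification of $\sigma_I^t$ as $(J^*,\pi_i^t(J^*))$, where $J^*=J_{\ell^*}$ is the deepest ancestor of $I$ still satisfying $\pi_i^t\in\Pi_i(J^*)$, is exactly the paper's argument. The paper's proof is two sentences: walk from $I$ toward the root until you reach an infoset $I'$ with $\pi_i^t\in\Pi_i(I')$, and take $\sigma_I^t=(I',\pi_i^t(I'))$; it never mentions the intermediate ancestors $J_\ell$ with $\ell>\ell^*$ at all.

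Your worry about those intermediate $J_\ell$ is legitimate on a fully literal reading of Line~16, and the paper's proof does not address it either: nothing forces the external regret minimizer at such a $J_\ell$ to output the on-path action $a_\ell^\star$, so in principle $(J_\ell,\pi_i^t(J_\ell))$ could also land in $\Sigma_i^c(I)\cap\{(J,\pi_i^t(J)):J\preceq I\}$. The paper implicitly resolves this by treating Line~16 procedurally---$\sigma_I^t$ \emph{is} the pair found by walking upward---which is precisely the ``trajectory'' reading you arrive at. Your inductive detour through $\sigma_{J_\ell}^t$ is correct as far as it goes but, as you yourself note, does not by itself rule out extra elements of the literal intersection; it is also unnecessary once the procedural interpretation is adopted. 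In short, you have been more careful than the paper and surfaced a genuine imprecision in the pseudocode, but the fix is the one you already suggest: read Line~16 as shorthand for ``the first off-path ancestor sequence encountered going from $I$ to the root,'' exactly as the paper's proof does.
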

\begin{proof}
	It is enough to proceed from infoset $I$ towards the root of the tree.
	Eventually, the procedure reaches an infoset $I'\in \I_i$ such that $\pi_i^t\in \Pi_i(I')$.
	Then, $\sigma_I^t$ is identified by the pair $(I',\pi_i^t(I'))$.
\end{proof}

\goingUp*
\begin{proof}\allowdisplaybreaks
	By hypothesis and since the action space $A(I)$ is finite we have that 
	\[
	\sum_{a\in A(I)} \hat R_{(I,a),J}^T=o(T).
	\]
	Moreover,
	\begin{align*}
	\sum_{a\in A(I)}\hat R_{(I,a),J}^T & = \sum_{a\in A(I)} \max_{\hat a\in A(J)}\mleft\{ \sum_{t=1}^T \one[\pi_i^t\in\Pi_i(I,a)] \mleft( \hat u_J^t(\hat a) - \hat u_J^t(\pi_i^t(J))	 \mright)  \mright\}\\
	& \geq \max_{\hat a\in A(J)}\mleft\{ \sum_{t=1}^T\sum_{a\in A(I)}\one[\pi_i^t\in\Pi_i(I,a)] \mleft( \hat u_J^t(\hat a) - \hat u_J^t(\pi_i^t(J))	 \mright) \mright\}\\
	& = \max_{\hat a\in A(J)}\mleft\{ \sum_{t=1}^T\one[\pi_i^t\in\Pi_i(I)] \mleft( \hat u_J^t(\hat a) - \hat u_J^t(\pi_i^t(J))	 \mright) \mright\}\\
	& = \max_{\hat a\in A(J)}\mleft\{ \sum_{t=1}^T\one[\pi_i^t\in\Pi_i(\sigma(I))] \mleft( \hat u_J^t(\hat a) - \hat u_J^t(\pi_i^t(J))	 \mright) \mright\}\\
	& = \hat R_{\sigma(I),J}^T.
	\end{align*}
	This concludes the proof.
\end{proof}

\icfrTh*
\begin{proof}
	By Theorem~\ref{th:efce}, in order to converge to an EFCE, it is enough to minimize the trigger regrets $R_{\sigma}^T$ for each player $i\in\pl$ and sequence $\sigma=(I,a)\in\Sigma_i$.
	This can be done by minimizing the subtree regrets $R_{\sigma,I}^T$ via the minimization of laminar subtree regrets $\hat R^T_{\sigma,I}$ for each sequence $\sigma=(J,a)\in\Sigma_i$ and infoset $I\in\mathcal{C}^\star(J)$ (Lemma~\ref{lemma:regret_bound}).

	For any infoset $I\in\I_i$, the laminar subtree regrets $\hat R_{\sigma,I}^T$ are partitioned in three groups on the basis of the trigger sequence $\sigma$:
	
	\begin{itemize}
		\item {\em Group 1}: $\sigma=(I,a)\in\Sigma_i$.
		Laminar subtree regrets belonging to this group are updated at rounds $t$ such that $\pi_i^t\in\Pi_i(I)$, otherwise they remain unchanged.
		Therefore, they are only updated when the strategy at $I$ is recommended
		by the internal-regret minimizer $\rmi_I$, which guarantees $\hat R_{\sigma,I}^T=o(T)$~\citep{cesa2006prediction}.
		
		\item {\em Group 2}: $\sigma=(J,a)\in\Sigma_i$ is such that $J\preceq I$, $J\neq I$, and $a$ is \emph{not} on the path from $J$ to $I$ ({\em i.e.}, for any $\pi_i\in\Pi_i$, $\pi_i(J)=a$ implies $\pi_i\notin\Pi_i(I)$).
		The sequence $\sigma_I^t$ is defined as a sequence compatible with $\pi_i^t$ and belonging to $\Sigma_i^c(I)$.
		By Lemma~\ref{lemma:unique_seq}, for each $I\in\I_i$ and $t = 1,\ldots, T$, $\sigma_I^t$ exists and is unique.
		Then, at most one laminar subtree regret term of Group 2 is updated at each round $t$, otherwise they are left unchanged.
		Whenever one of these regrets is affected by the choice at $t$, the action at $I$ is selected according to the external-regret minimizer $\rme_{\sigma_I^t,I}$.
		This ensures that each laminar subtree regret belonging to this group is $o(T)$ by the known properties of no-external-regret algorithms~\citep{cesa2006prediction}.
		
		\item {\em Group 3}: $\sigma=(J,a) \in \Sigma_i$ is such that $J\preceq I$, $J\neq I$, and $a$ is on the path from $J$ to $I$ (notice that for each $J\preceq I$, $J\neq I$ one such $a$ is unique because player $i$ has perfect recall).
		Let $I'\in\I_i$ be such that $J\preceq I' \preceq I$ and $I' \in\mathcal{C}(J,a)$.
		Notice that, given $I$, $J$, and $\sigma$, one such $I'$ is unique because of the perfect recall assumption.
		By Lemma~\ref{lemma:going_up}, we know that if $\hat R_{\sigma',I}^T=o(T)$ for all $\sigma'=(I',a')\in\Sigma_i$, then it must be the case that $\hat R_{\sigma, I}^T=o(T)$ (notice that $\sigma = (J,a)$ is the same as $\sigma(I')$).
		By applying the lemma recursively, until all $\sigma'$ belong to either Group 1 or 2, we can guarantee that $\hat R_{\sigma, I}^T=o(T)$.
	\end{itemize}
	This concludes the proof.
\end{proof} 

\section{Experimental Evaluation}

Appendix~\ref{subsec:exp_games} provides a detailed description of the benchmark games used in our experiments.
Finally, Appendix~\ref{subsec:more_exp} shows additional experimental results for ICFR.

\subsection{Benchmark games}\label{subsec:exp_games}

The size (in terms on number of infosets and sequences) of the parametric instances we use as benchmark is described in~\cref{fig:dimensions}.
In the following, we provide a detailed explanation of the rules of the games.

\begin{figure}[H]
	\centering
		{\begin{minipage}{.4\textwidth}
			\small
			\setlength{\tabcolsep}{2pt}
			\begin{tabular}{c@{\hskip 6pt}ccccc}
				& $|\pl|$ & \textbf{Ranks} & \textbf{Player}& \textbf{Infosets} & \textbf{Sequences} \\[-.1mm]
				\midrule\\[-3mm]
				\multirow{6}{*}{Kuhn} & \multirow{3}{*}{3} & \multirow{3}{*}{3} & Player 1 & 12 & 25 \\[-.1mm]
				&&& Player 2 & 12 & 25 \\[-.1mm]
				&&& Player 3 & 12 & 25 \\[2mm]
				& \multirow{3}{*}{3} & \multirow{3}{*}{4} & Player 1 & 16 & 33  \\[-.1mm]
				&&&  Player 2 & 16 & 33  \\[-.1mm]
				&&&  Player 3 & 16 & 33  \\[-.1mm]
				\midrule
				\multirow{7}{*}{Goofspiel} & \multirow{2}{*}{2} & \multirow{2}{*}{3} & Player 1 & 213 & 262  \\[-.1mm]
				&&& Player 2 & 213 & 262  \\[2mm]
				& \multirow{2}{*}{2} & \multirow{2}{*}{4} & Player 1 & 8716 &  10649 \\[-.1mm]
				&&&  Player 2 & 8716 &  10649 \\[2mm]
				& \multirow{3}{*}{3} & \multirow{3}{*}{3} & Player 1 & 837 &  934 \\[-.1mm]
				&&&  Player 2 & 837 & 934  \\[-.2mm]
				&&&  Player 3 & 837 & 934  \\[-.2mm]
				\midrule
				\multirow{3}{*}{Leduc} & \multirow{3}{*}{3} & \multirow{3}{*}{3} & Player 1 & 3294 &  7687 \\[-.1mm]
				&&& Player 2 & 3294 & 7687  \\[-.1mm]
				&&& Player 2 & 3294 & 7687  \\[-.1mm]
				\bottomrule
			\end{tabular}\\[2mm]
		
			\hspace{-3mm}
			\begin{tabular}{c@{\hskip 6pt}ccccc}
			& \textbf{Grid} & \textbf{Rounds} & \textbf{Player} & \textbf{Infosets} & \textbf{Sequences} \\[-.1mm]
			\midrule\\[-3mm]
			\multirow{2}{*}{Battleship} & \multirow{2}{*}{$(2,2)$} & \multirow{2}{*}{$3$} & Player 1 & 1413 & 2965 \\[-.1mm]
			&  &  & Player 2 &  1873 & 4101 \\[-.1mm]
			\bottomrule
		\end{tabular}
		\end{minipage}}
		\caption{The size of our parametric game instances in terms of number of sequences and infosets for each player of the game.}
		\label{fig:dimensions}
\end{figure}

\paragraph{Kuhn poker}
The two-player version of the game was originally proposed by~\cite{kuhn1950simplified}, while the three-player variation is due to~\cite{farina2018exante}.
In a three-player Kuhn poker game with rank $r$, there are $r$ possible cards. Each player initially pays one chip to the pot, and she/he is dealt a single private card. 
The first player may {\em check} or {\em bet} ({\em i.e.,} put an additional chip in the pot). Then, the second player can check or bet after a first player's check, or {\em fold/call} the first player's bet. If no bet was previously made, the third player can either check or bet. Otherwise, she/he has to fold or call. After a bet of the second player (resp., third player), the first player (resp., the first and the second players) still has to decide whether to fold or to call the bet. At the showdown, the player with the highest card who has not folded wins all the chips in the pot.

\paragraph{Goofspiel}
This game was originally introduced by~\cite{ross1971goofspiel}. Goofspiel is essentially a bidding game where each player has a hand of cards numbered from $1$ to $r$ ({\em i.e.}, the rank of the game). A third stack of $r$ cards is shuffled and singled out as prizes. 
Each turn, a prize card is revealed, and each player privately chooses one of her/his cards to bid, with the highest card winning the current prize. In case of a tie, the prize card is discarded. After $r$ turns, all the prizes have been dealt out and the payoff of each player is computed as follows: each prize card’s value is equal to its face value and the players’ scores are computed as the sum of the values of the prize cards they have won.
We remark that due to the tie-breaking rule that we employ, even two-player instances of the game are general-sum. All the Goofspiel instances have {\em limited information}, {\em i.e.}, actions of the other players are observed only at the end of the game. This makes the game strategically more challenging, as players have less information regarding previous opponents' actions.

\paragraph{Leduc}
We use a three-player version of the classical Leduc hold'em poker introduced by~\citet{southey2005bayes}.
In a Leduc game instance with $r$ ranks the deck consists of three suits with $r$ cards each. As the game starts players pay one chip to the pot. There are two betting rounds. In the first one a single private card is dealt to each player while in the second round a single board card is revealed. The maximum number of raise per round is set to two, with raise amounts of 2 and 4 in the first and second round, respectively.

Battleship is a parametric version of the classic board game, where two competing fleets take turns at shooting at each other. For a detailed explanation of the Battleship game see the work by~\cite{farina2019correlation} that introduced it. Our instance has loss multiplier equal to $2$, and one ship of length $2$ and value $1$ for each player

\subsection{Additional results}\label{subsec:more_exp}

We provide detailed results on the convergence of ICFR in terms of players' incentives to deviate from the obtained empirical frequency of play $\bar \mu^T$.
For EFCEs, these incentives correspond to the maximum deviation of each player, as defined in the outer maximization in Equation~\eqref{eq:efce_simp_eps}.
Intuitively, for each player, this represents the maximum utility any trigger agent for that player could gain by deviating from the point in which it gets triggered onwards.

We do not only consider deviations as prescribed by EFCE, but we also show results for other solutions concepts involving correlation in EFGs, namely EFCCEs and NFCCEs (see Appendix~\ref{appendix:efce} for their informal description, while for their formal definitions the reader can refer to~\citet{farina2019coarse}).
As for EFCCEs, the players' incentives to deviate are defined in a way similar to EFCE, using the definition of trigger agent suitable for EFCCEs (see~\citep{farina2019coarse}).
Instead, for NFCCEs, each player's incentive to deviate corresponds to the utility she/he could gain by playing the best normal-form plan given $\bar{\mu}^T$.
We recall that the following relation holds: EFCE $\subseteq$ EFCCE $\subseteq$ NFCCE.

%
%

In Figures~\ref{fig:dev_bs}~-~\ref{fig:dev_leduc_3pl_3ranks}, we report players' incentives to deviate obtained with ICFR for EFCE (Left), EFCCE (Center), and NFCCE (Right).
As the plots show, the convergence rate is similar for the three cases, electing ICFR as an appealing algorithm also for EFCCEs and NFCCEs.
This is the first example of algorithm computing $\epsilon$-EFCCEs efficiently in general-sum EFGs with more than two players and chance.
\citet{celli2019learning} propose some algorithms to compute $\epsilon$-NFCCEs in general-sum EFGs with an arbitrary number of players (including chance).
Our algorithm outperforms those of~\citet{celli2019learning}, since the latter cannot reach a $0.1$-NFCCE in less than $24$h on a Leduc instance with 1200 total infosets and a one-bet maximum per bidding round.
Instead,~\icfr~reaches $\epsilon=0.1$ in around $9$h on an arguably more complex Leduc instance ({\em i.e.}, more than 9k total infosets and a two-bet maximum per round).


\begin{figure}[H]
	\centering
	\hspace{-6mm}
	{\begin{minipage}{5cm}\centering
		\includegraphics[width=1.1\textwidth]{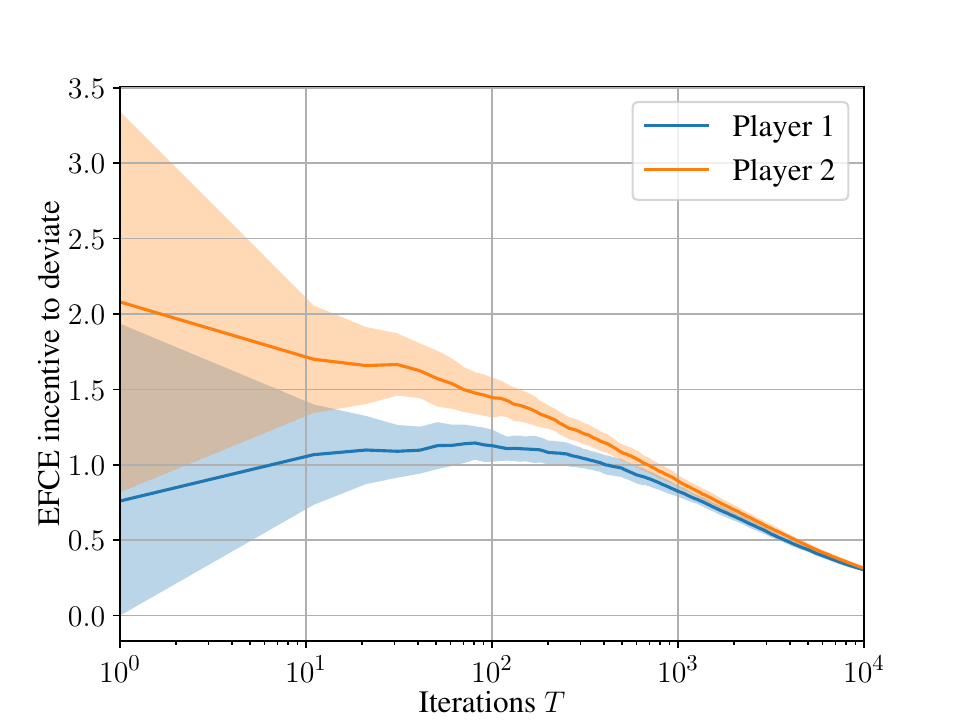}
	\end{minipage}}
	{\begin{minipage}{5cm}
		\includegraphics[width=1.1\textwidth]{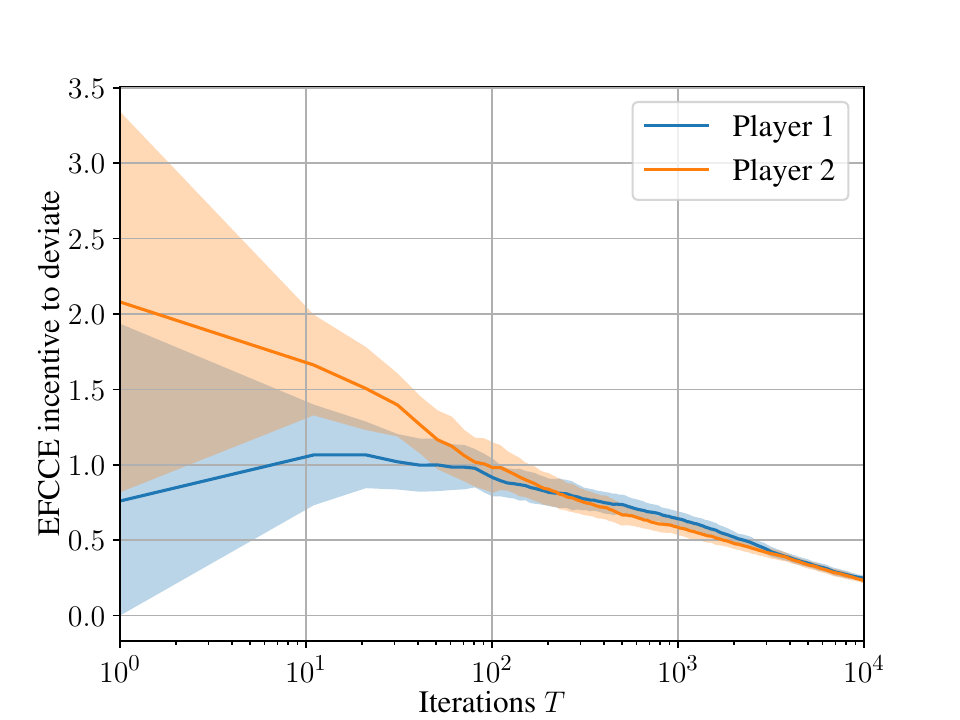}
	\end{minipage}}
	\begin{minipage}{5cm}
		\includegraphics[width=1.1\textwidth]{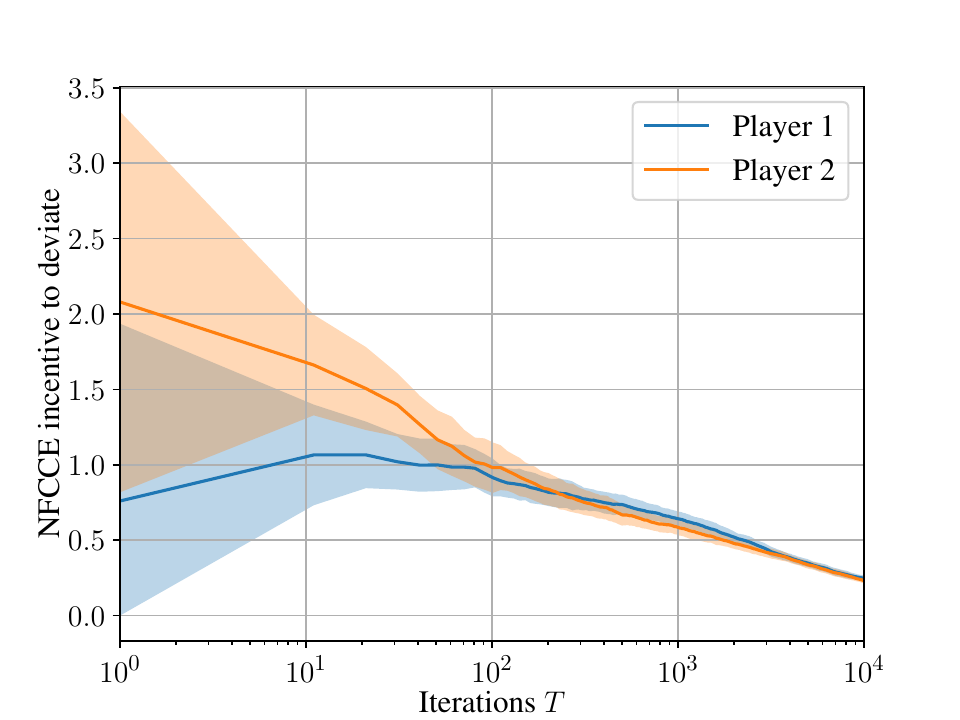}
	\end{minipage}
	\caption{Players' incentives to deviate with ICFR in the Battleship game.}
	\label{fig:dev_bs} 
\end{figure}

\begin{figure}[H]
	\centering
	\hspace{-6mm}
	{\begin{minipage}{5cm}\centering
			\includegraphics[width=1.1\textwidth]{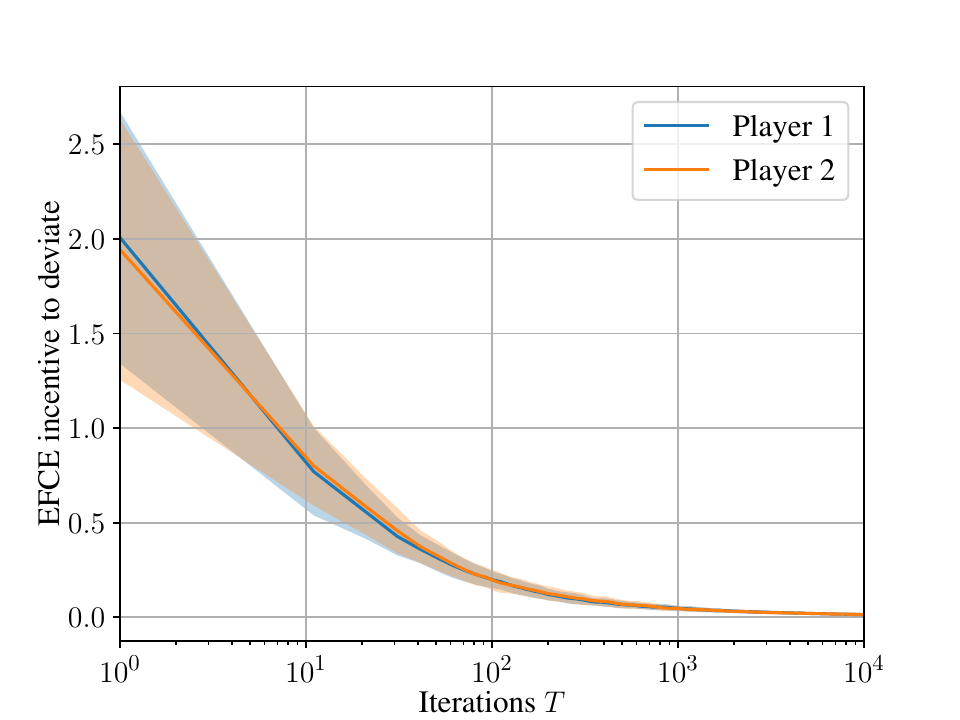}
	\end{minipage}}
	{\begin{minipage}{5cm}
			\includegraphics[width=1.1\textwidth]{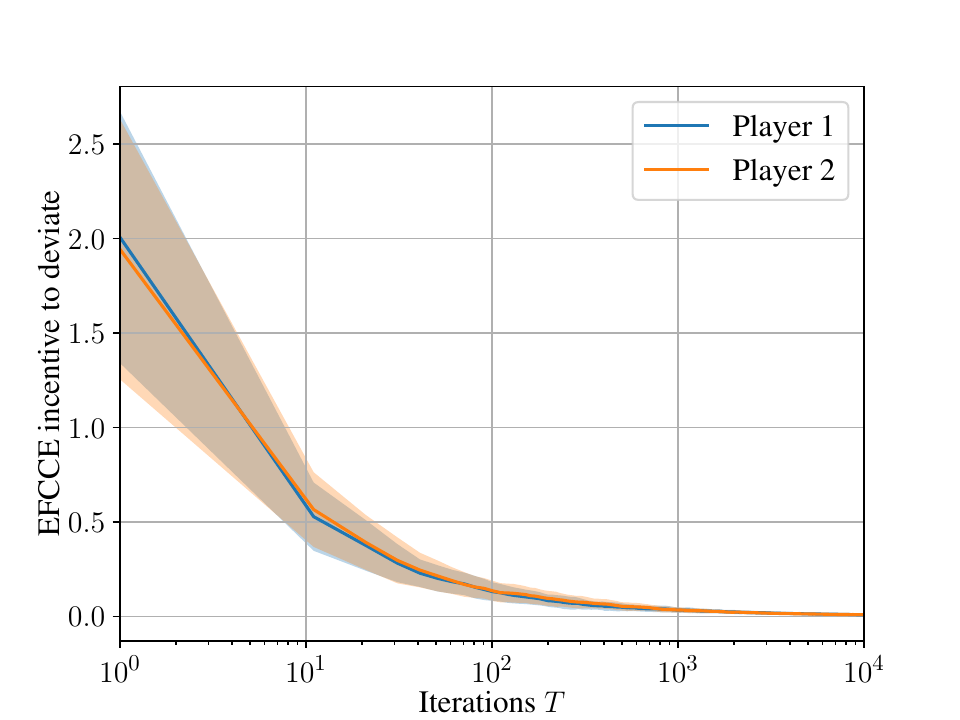}
	\end{minipage}}
	\begin{minipage}{5cm}
		\includegraphics[width=1.1\textwidth]{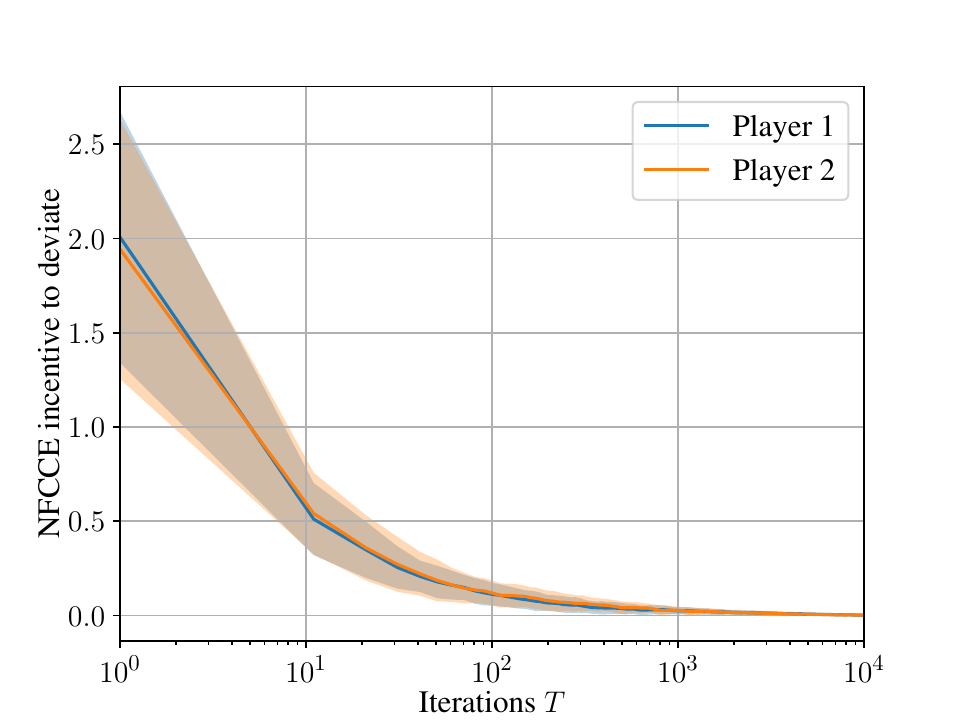}
	\end{minipage}
	\caption{Players' incentives to deviate with ICFR in two-player Goofspiel with $3$ ranks.}
	\label{fig:dev_goof_2pl_3ranks_dot_li} 
\end{figure}

\begin{figure}[H]
	\centering
	\hspace{-6mm}
	{\begin{minipage}{5cm}\centering
			\includegraphics[width=1.1\textwidth]{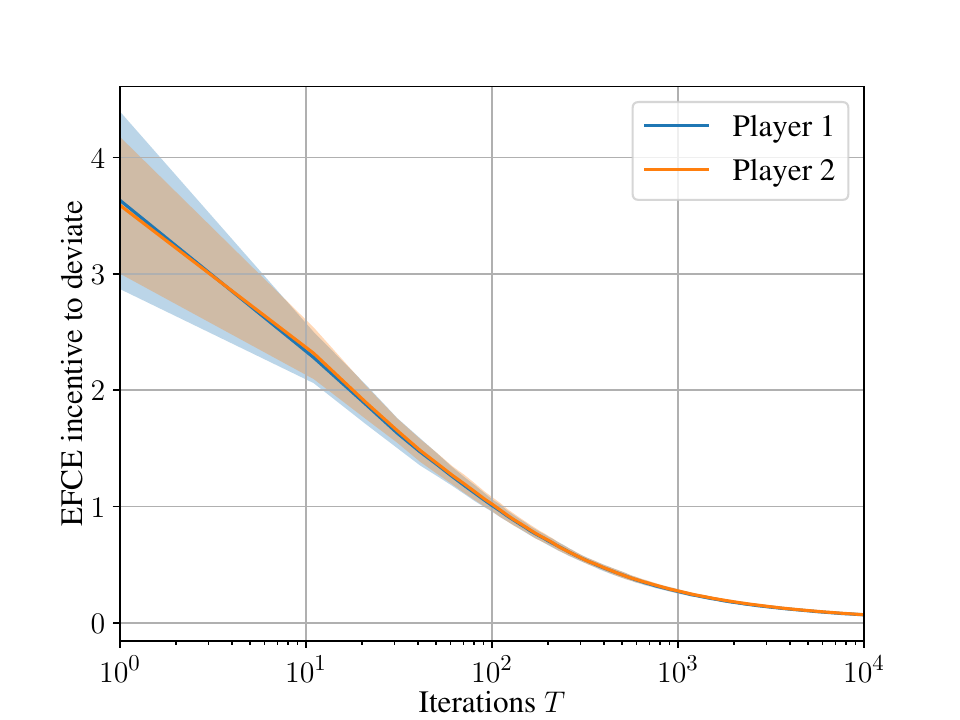}
	\end{minipage}}
	{\begin{minipage}{5cm}
			\includegraphics[width=1.1\textwidth]{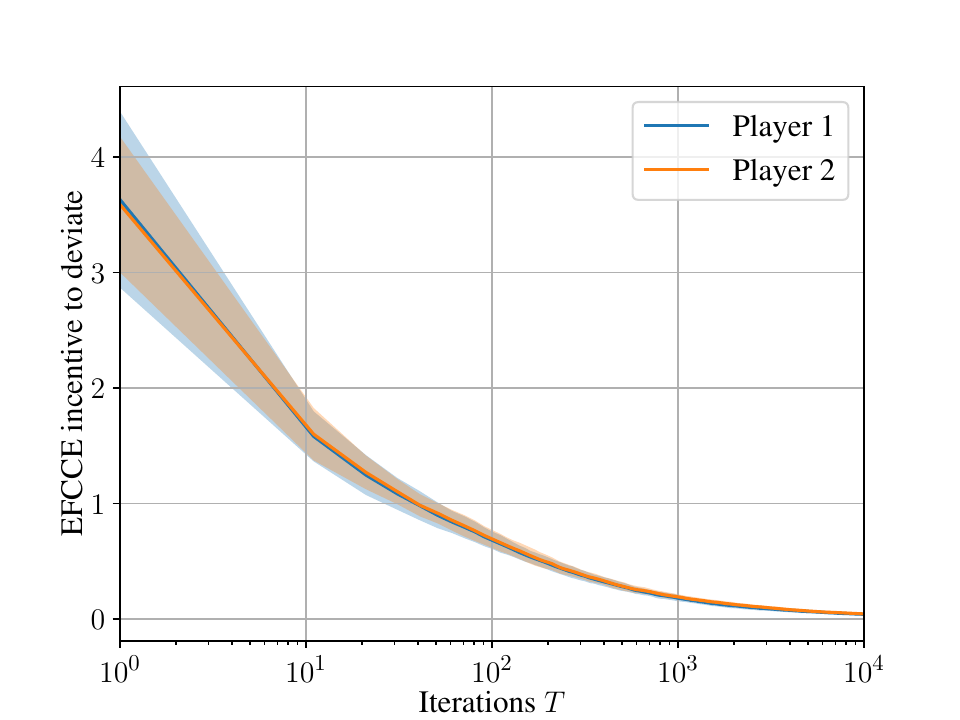}
	\end{minipage}}
	\begin{minipage}{5cm}
		\includegraphics[width=1.1\textwidth]{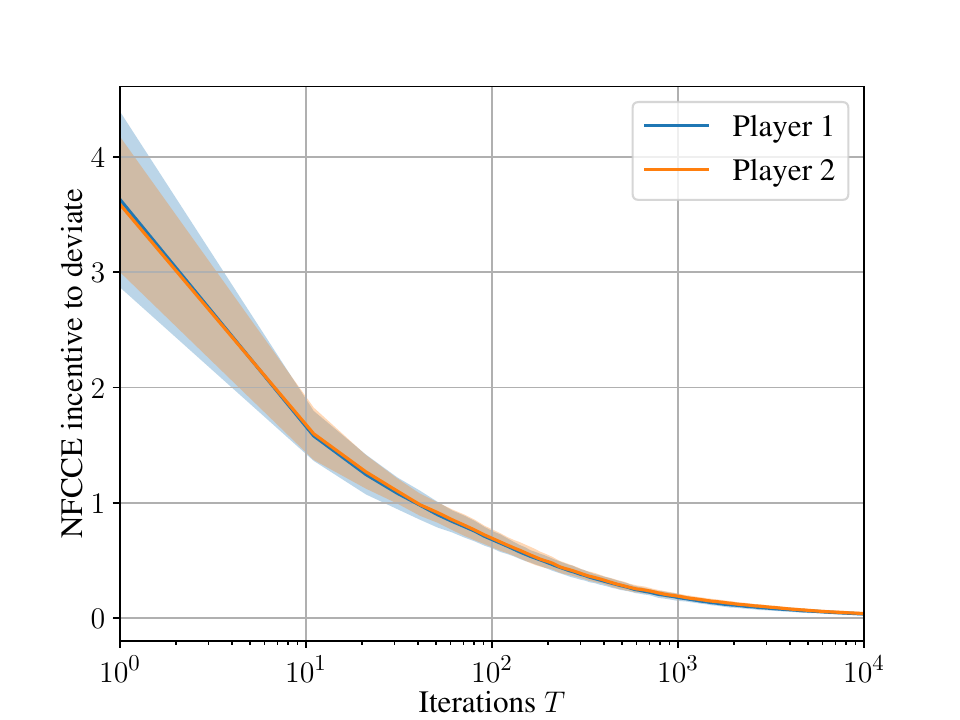}
	\end{minipage}
	\caption{Players' incentives to deviate with ICFR in two-player Goofspiel with $4$ ranks.}
	\label{fig:dev_goof_2pl_4ranks_dot_li} 
\end{figure}

\begin{figure}[H]
	\centering
	\hspace{-6mm}
	{\begin{minipage}{5cm}\centering
			\includegraphics[width=1.1\textwidth]{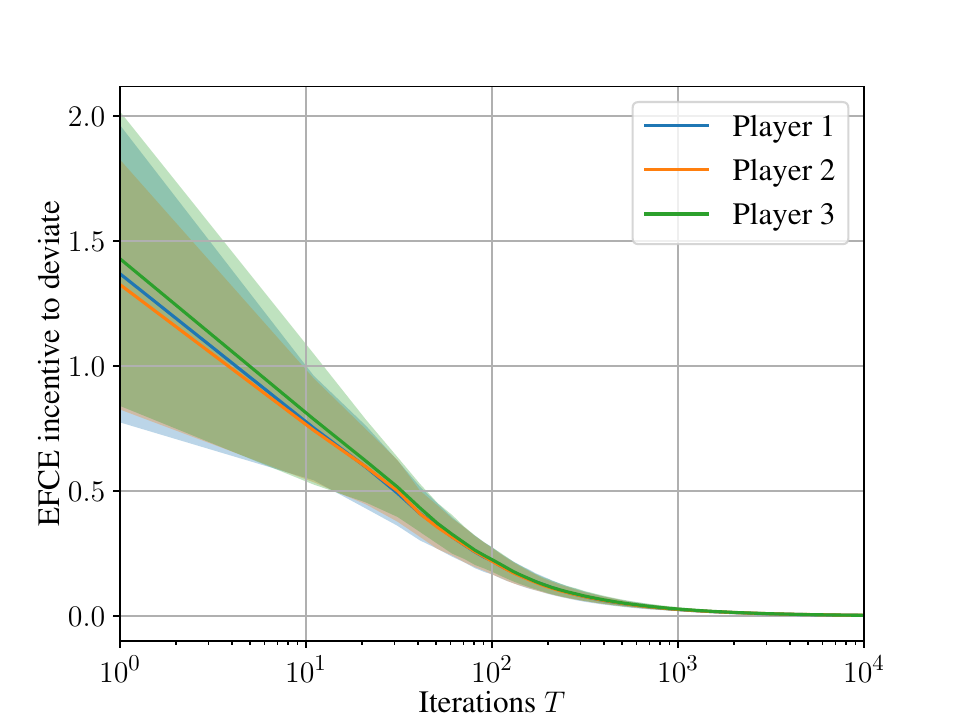}
	\end{minipage}}
	{\begin{minipage}{5cm}
			\includegraphics[width=1.1\textwidth]{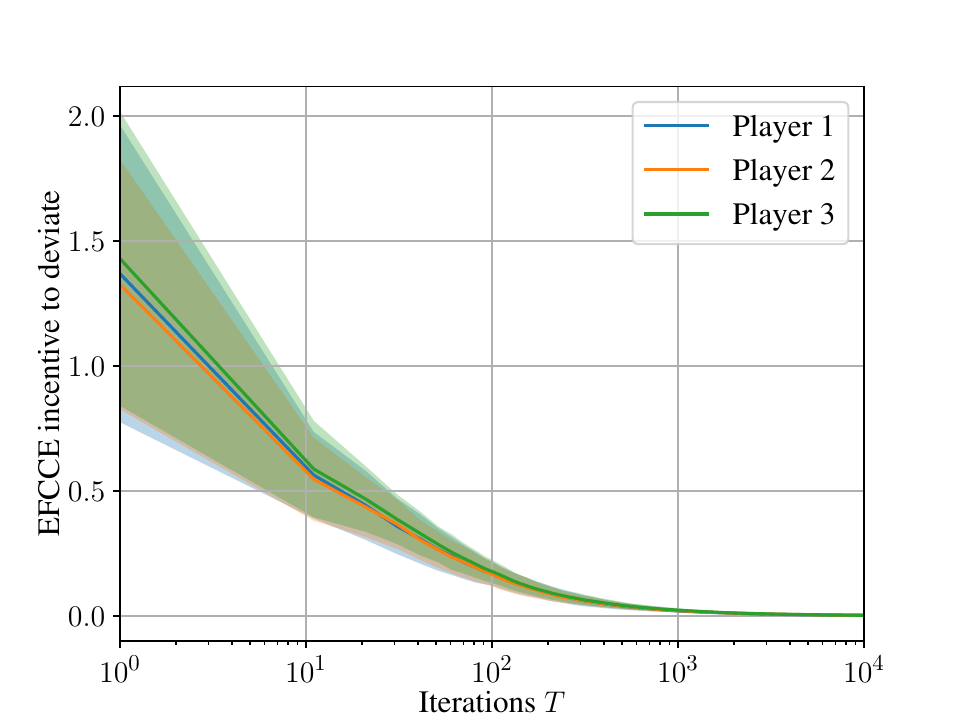}
	\end{minipage}}
	\begin{minipage}{5cm}
		\includegraphics[width=1.1\textwidth]{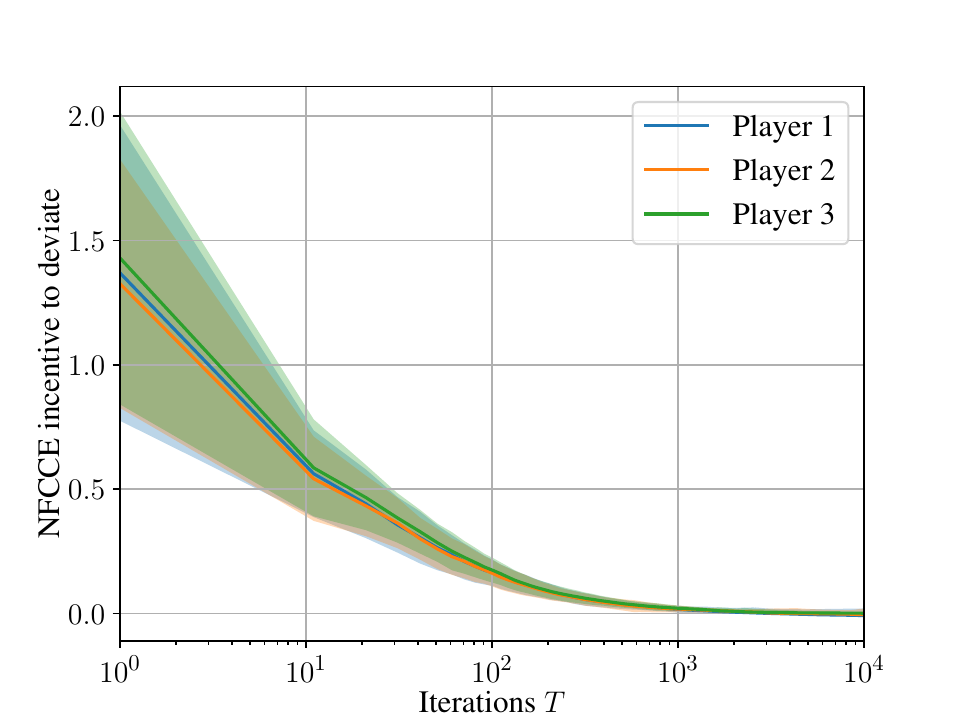}
	\end{minipage}
	\caption{Players' incentives to deviate with ICFR in three-player Goofspiel with $3$ ranks.}
	\label{fig:dev_goof_3pl_3ranks_dot_li} 
\end{figure}

\begin{figure}[H]
	\centering
	\hspace{-6mm}
	{\begin{minipage}{5cm}\centering
			\includegraphics[width=1.1\textwidth]{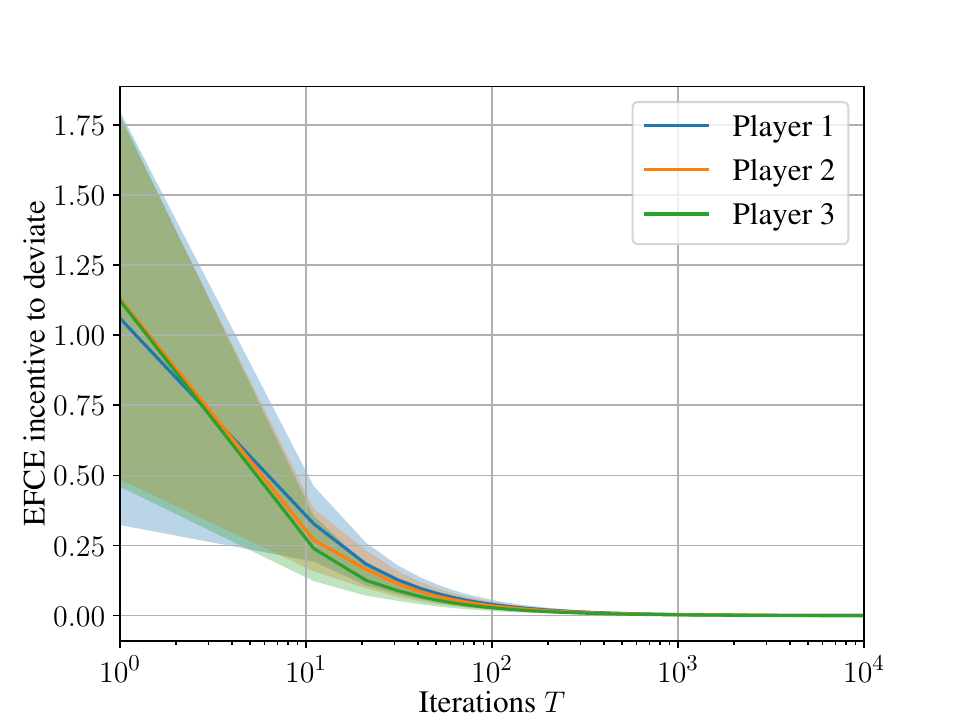}
	\end{minipage}}
	{\begin{minipage}{5cm}
			\includegraphics[width=1.1\textwidth]{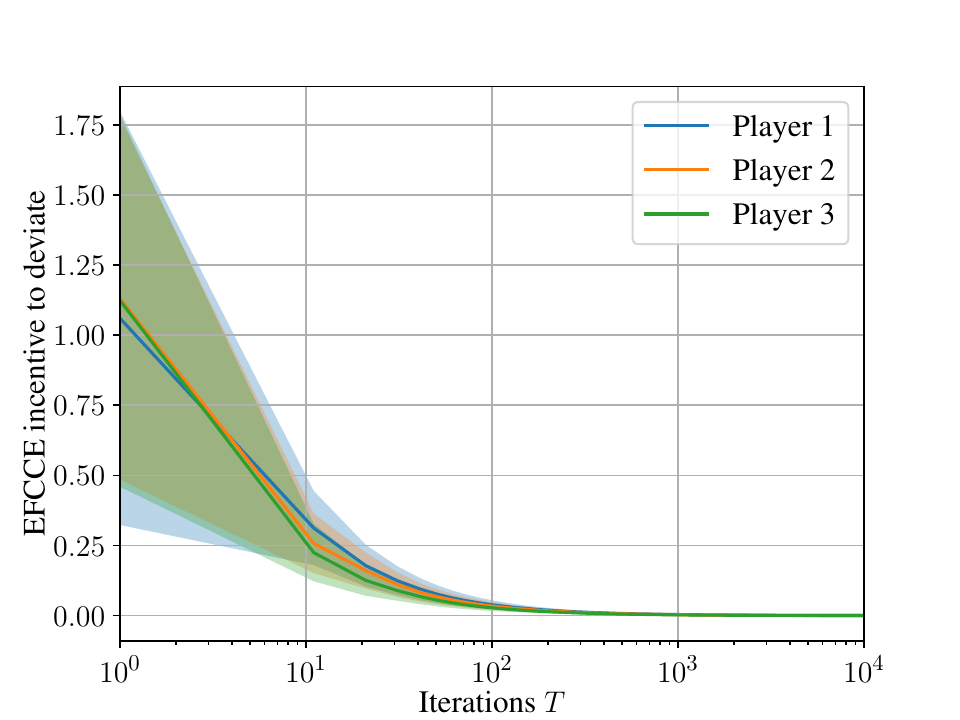}
	\end{minipage}}
	\begin{minipage}{5cm}
		\includegraphics[width=1.1\textwidth]{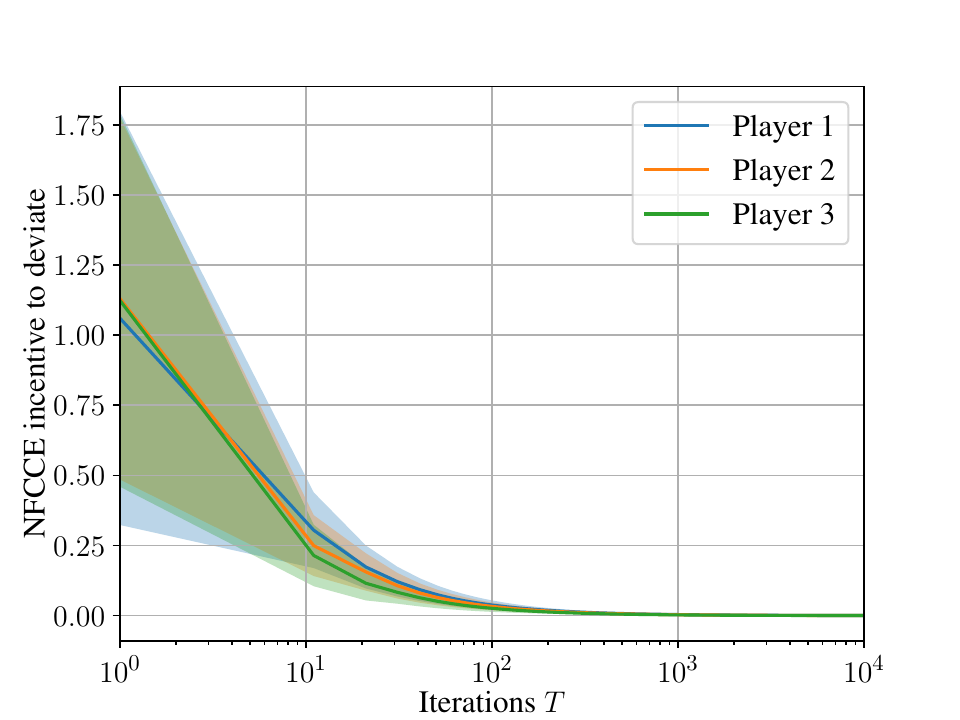}
	\end{minipage}
	\caption{Players' incentives to deviate with ICFR in three-player Kuhn Poker with $3$ ranks.}
	\label{fig:dev_kuhn_3pl_3ranks} 
\end{figure}

\begin{figure}[H]
	\centering
	\hspace{-6mm}
	{\begin{minipage}{5cm}\centering
			\includegraphics[width=1.1\textwidth]{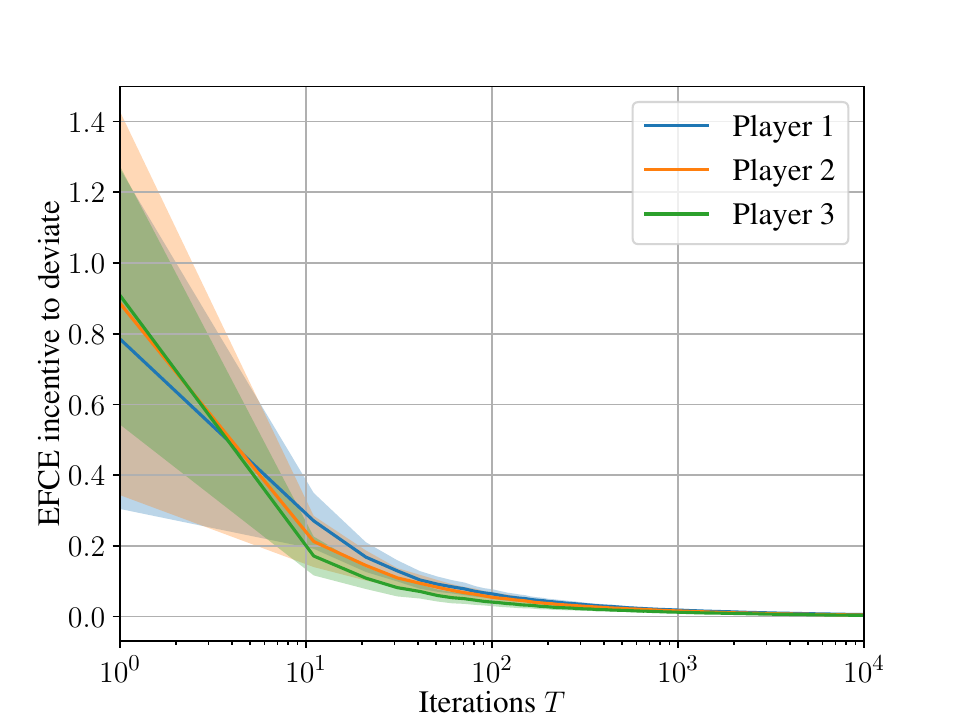}
	\end{minipage}}
	{\begin{minipage}{5cm}
			\includegraphics[width=1.1\textwidth]{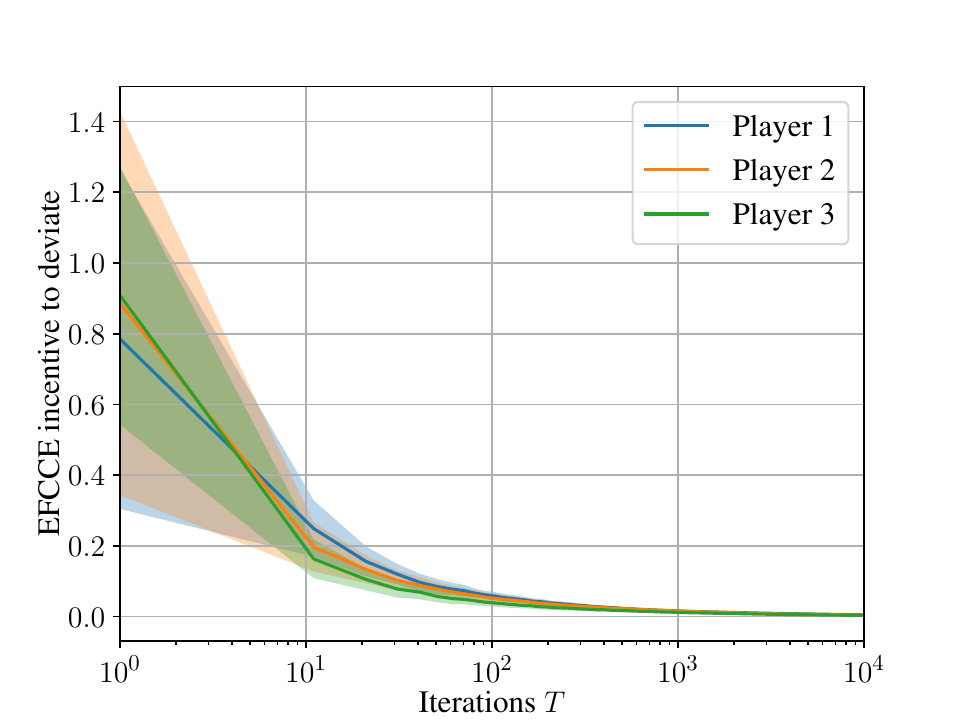}
	\end{minipage}}
	\begin{minipage}{5cm}
		\includegraphics[width=1.1\textwidth]{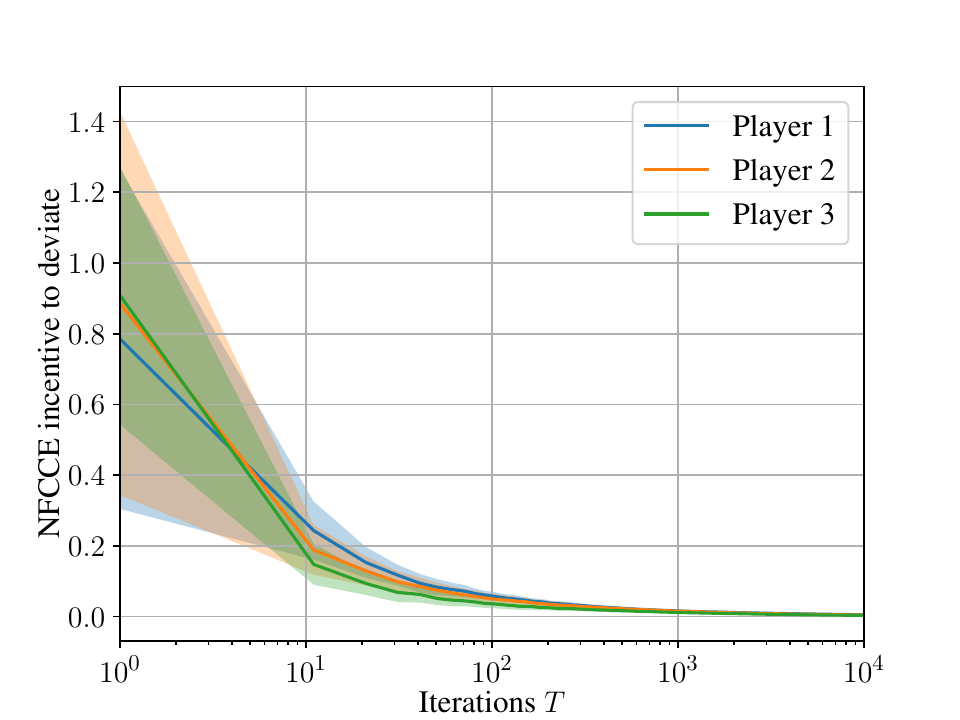}
	\end{minipage}
	\caption{Players' incentives to deviate with ICFR in three-player Kuhn Poker with $4$ ranks.}
	\label{fig:dev_kuhn_3pl_4ranks} 
\end{figure}

\begin{figure}[H]
	\centering
	\hspace{-6mm}
	{\begin{minipage}{5cm}\centering
			\includegraphics[width=1.1\textwidth]{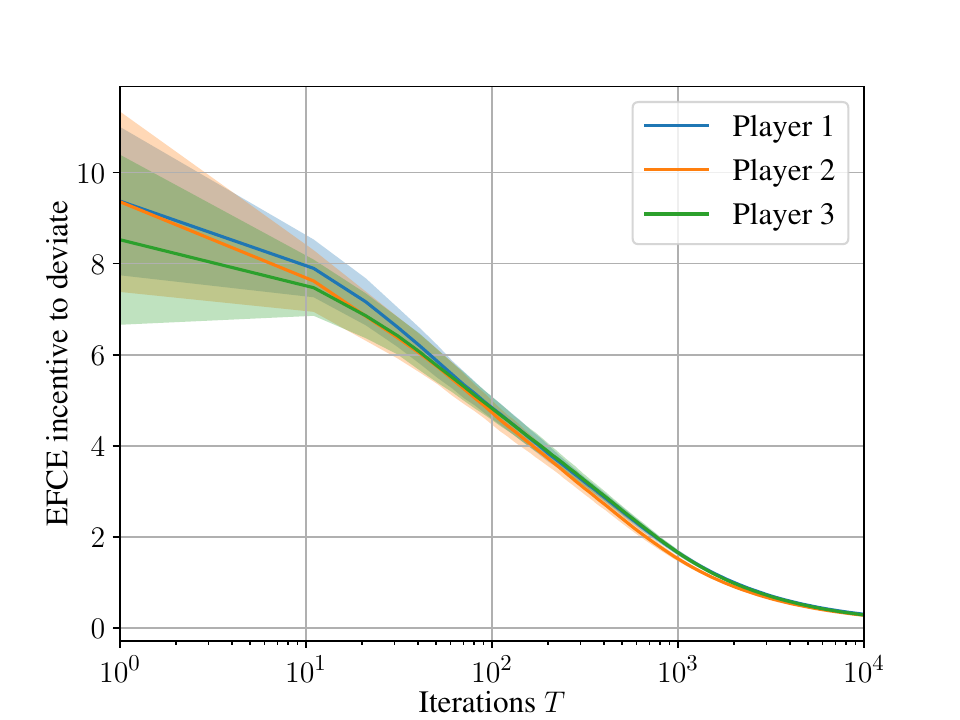}
	\end{minipage}}
	{\begin{minipage}{5cm}
			\includegraphics[width=1.1\textwidth]{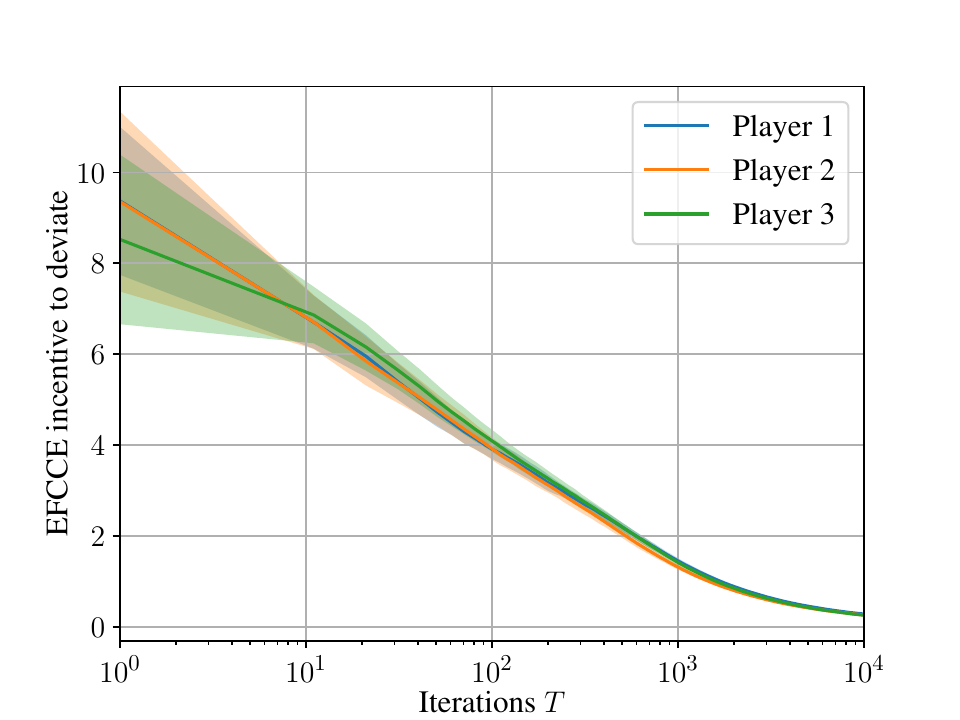}
	\end{minipage}}
	\begin{minipage}{5cm}
		\includegraphics[width=1.1\textwidth]{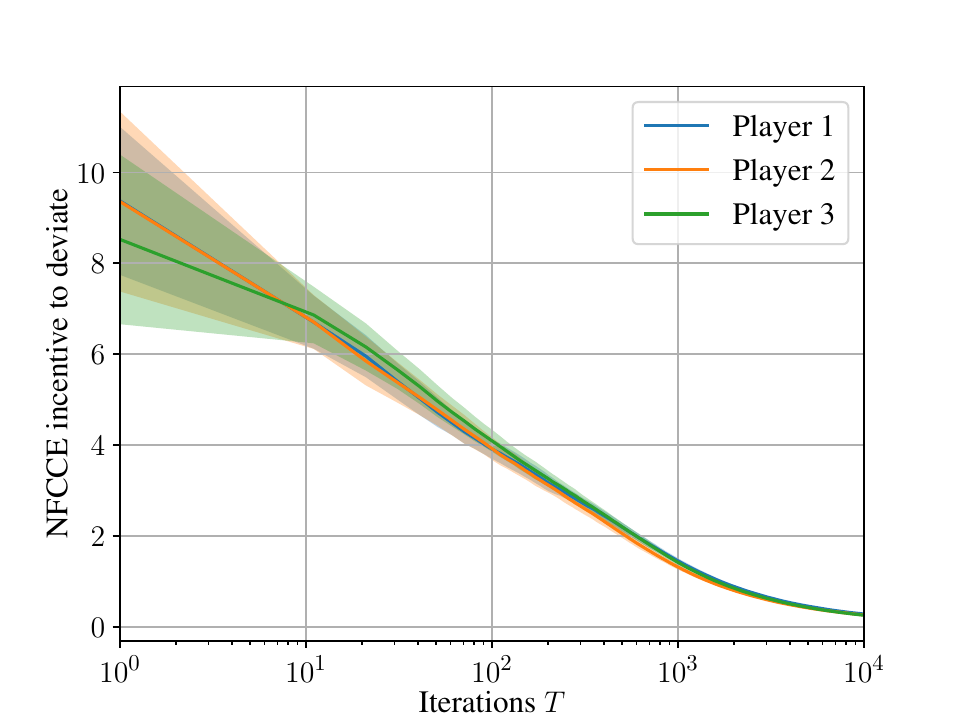}
	\end{minipage}
	\caption{Players' incentives to deviate with ICFR in three-player Leduc Poker with $3$ ranks.}
	\label{fig:dev_leduc_3pl_3ranks} 
\end{figure}

\end{document}